\let\csname equation*\endcsname\relax
\let\csname endequation*\endcsname\relax
\newtheorem{thm}[subsection]{Theorem}
\newtheorem{lem}[subsection]{Lemma}
\newtheorem{cor}[subsection]{Corollary}
\newtheorem{defn}[subsection]{Definition}
\newtheorem{prop}[subsection]{Proposition}
\newtheorem{rem}[subsection]{Remark}
\newcommand{\dd}{\mathrm{d}}
\newcommand{\beq}{\begin{equation}}
\newcommand{\eeq}{\end{equation}}
\numberwithin{equation}{section}
\def\al{\alpha}
\def\be{\beta}
\def\ga{\gamma}
\def\R{\mathcal{R}}
\def\Q{\mathcal{Q}}
\def\M{\tilde{M}}
\def\h{\tilde{h}}
\def\l{\tilde{\ell}}
\begin{document}

\title{The small sphere limit of quasilocal energy in higher dimensions along lightcone cuts}

\author{Jinzhao Wang}

\address{Institute for Theoretical Physics, ETH 8093 Z\"urich, Switzerland}
\ead{jinzwang@phys.ethz.ch}

\begin{abstract}
The problem of quasilocal energy has been extensively studied mainly in four dimensions. Here we report results regarding the quasilocal energy in spacetime dimension $n\geq 4$. After generalising three distinct quasilocal energy definitions to higher dimensions under appropriate assumptions, we evaluate their small sphere limits along lightcone cuts shrinking towards the lightcone vertex. The results in vacuum are conveniently represented in terms of the electromagnetic decompositions of the Weyl tensor. We find that the limits at presence of matter yield the stress tensor as expected, but the vacuum limits are in general not proportional to the Bel-Robinson superenergy $Q$ in dimensions $n>4$. The result defies the role of the Bel-Robinson superenergy as characterising the gravitational energy in higher dimensions, albeit the fact that it uniquely generalises. Surprisingly, the Hawking energy and the Brown-York energy exactly agree upon the small sphere limits across all dimensions. The ``new'' vacuum limit $\Q$, however, cannot be interpreted as a gravitational energy because of its non-positivity. Furthermore, we also give the small sphere limits of the Kijowski-Epp-Liu-Yau type energy in higher dimensions, and again we see $\Q$ in place of $Q$. Our work extends earlier investigations of the small sphere limits~\cite{horowitz1982note,bergqvist1994energy,brown1999canonical,yu2007limiting}, and also complements~\cite{miao2017quasi}. 
\end{abstract}

\section{Introduction.}
The gravitational field itself carries energy, but it is tricky to locally describe it in general relativity. It is well- known that the equivalence principle forbids a covariant stress tensor characterising the energy content of the gravitational field \cite{misner2017gravitation}. Nevertheless, there is no obstruction in giving nonlocal prescriptions and the quasilocal energy (QLE) is such an attempt. Over the past half-century, QLE is an ongoing research subject studied by both physicists and mathematicians \cite{szabados2009quasi,hawking1968gravitational,hayward1994quasilocal,penrose1982quasi,brown1993quasilocal,kijowski1997simple,liu2003positivity,wang2009quasilocal,epp2000angular,bartnik1989new,bousso2019outer}. Nevertheless, QLE is rarely studied in spacetime dimensions beyond four. 

Here we make an attempt to investigate the local behaviours, also known as the small sphere limit, of quasilocal energy proposals that can be reasonably generalised to higher dimensions. As the gravitational energy density is an invalid notion, the small sphere limit is as local as one can probe about the gravitational energy. It also serves as an important guidance for a sound definition for QLE. Physically, the limit should be proportional to the stress tensor at leading order or the Bel-Robinson (BR) superenergy $Q_0$ in vacuum \cite{szabados2009quasi}. Given that the BR superenergy uniquely generalises to higher dimensions, which we denote as $Q$, one should expect the QLE defined for higher dimensions to reproduce these small sphere limits universally. In $n=4$, there are many results concerning the small sphere limits of various QLE's. Some notable results are for the Hawking energy (by Horowitz and Schmidt \cite{horowitz1982note}), the Brown-York (BY) energy (by Brown, Lau and York (BLY) \cite{brown1999canonical}), the Kijowski-Epp-Liu-Yau (KELY) energy (by Yu \cite{yu2007limiting}) and the Wang-Yau (WY) energy (by Chen, Wang and Yau \cite{chen2018evaluating}). They all exactly agree upon the non-vacuum limit. In vacuum, these QLE's give $\frac1{90}Q_0$ (up to an extra term in cases of KELY and WY) in the small sphere limit via the lightcone cuts.  We generalise these definitions, except for WY energy, to higher dimensions under appropriate assumptions and study their small-sphere behaviours. We find that when $n>4$, a new quantity $\Q$ rather than the BR superenergy $Q$ plays the role $Q_0$ in four dimensions. In terms of the electromagnetic decomposition of the Weyl tensor, $\Q$ is defined as
\beq\label{newquantity}
\Q :=  \frac{(6n^2-20n+8)E^2+6(n-3)H^2-3D^2}{36(n-3)(n-2)(n^2-1)},
\eeq
which only matches with $Q$ when $n=4.$

There is a canonical way to evaluate the small sphere limits as proposed by Horowitz and Schmidt in studying the Hawking energy~\cite{horowitz1982note}. The small sphere limit towards some point $p$ is taken in the following way. Let $N_p$ denote the future-directed lightcone generated by null generators $\ell^+$ parameterised by affine parameter $l$. We pick a future-directed timelike unit vector $e_0$ and normalised $\ell^+$ at $p$ by 
\beq
\langle e_0,\ell^+ \rangle =-1.\label{normalisation1}
\eeq
The lightcone cut is the family of codimension-two surfaces $S_l$ define as the level sets of $l$ on $N_p$. The ingoing null generators on $N_p$ are denoted as $\ell^-$ and they are normalised by
\beq
\langle \ell^-,\ell^+ \rangle =-1.\label{normalisation2}
\eeq
The small sphere limits are given by evaluating the QLE on $S_l$ and take $l$ to zero.  Note that people use a different small sphere limit in the Riemannian setting \cite{fan2009large,wiygul2018bartnik} and obtain results of the Hawking energy and the BY energy, which are not comparable with results evaluated using the lightcone cuts in the spacetime setting.

Here we first study a natural $n-$dimensional generalisation of the Hawking-Hayward (HH) energy \cite{hawking1968gravitational,hayward1994quasilocal}. Hawking's proposal is motivated by the gravitational radiation and Hayward later refined it to ensure the rigidity holds. We choose to study the HH energy not only because it is a canonical QLE that serves as a useful tool in mathematical relativity, most notably in establishing the Riemannian Penrose inequality \cite{huisken2001inverse}, but also because it does admit a straightforward and rather unique generalisation to higher dimensions. 

We also study a different class of QLE definitions based upon the Hamilton-Jacobi analysis\cite{brown1993quasilocal,liu2003positivity,wang2009quasilocal,miao2015quasi}. They are the Brown-York energy \cite{brown1993quasilocal} and the Kijowski-Epp-Liu-Yau energy \cite{epp2000angular,kijowski1997simple,liu2003positivity}. One important feature that distinguishes this approach from others is that it requires a flat reference via isometric embedding of the codimension-2 surface to the Minkowski spacetime as the zero-point energy. In four dimensions, the existence of the isometric embedding is guaranteed by the results by Nirenberg \cite{nirenberg1953weyl} and Pogorelov \cite{pogorelov1952regularity}. In particular, we will be considering the lightcone reference, where $S$ is embedded on a lightcone in the Minkowski spacetime. We choose to use such a reference in order for our results to be comparable to earlier works by BLY and Yu. The existence for the lightcone embedding is guaranteed by Brinkmann's result \cite{brinkmann1923riemann}. It states that a simply connected $n-2$-dimensional Riemannian manifold $\Sigma$ can be isometrically embedded into a lightcone in the $n$- dimensional Minkowski spacetime if and only if $\Sigma$ is conformally flat .  Furthermore, such embedding is unique up to orthochronous Lorentz transfomrations \cite{liu2018rigidity}. For $n=4$, every two-dimensional small sphere is conformally flat so the existence is guaranteed for any surface.

 In higher dimensions, however, the isometric embedding problem is overdetermined, so generally the reference energy cannot be defined. It is still an open problem to properly generalise the above mentioned Hamilton-Jacobi based proposals to higher dimensions. Nevertheless, for our purposes of looking at the small sphere limit along the lightcone cut specified by $(p,e_0)$, we can still proceed under the assumption that such isometric embeddings do exist for our choices of $(p,e_0)$. Such existence assumption is also held by Miao, Tam and Xie in \cite{miao2017quasi}.  After all, we are interested in the local behaviours of QLE and their connections to the BR superenergy in higher dimensions, so we simply consider cases when the lightcone reference does exist. By Brinkmann's result, this implies that the lightcone cuts should be conformally flat.  
 
The non-vacuum limit in general is given by the Ricci-related quantities and we shall use the Einstein equation to introduce the stress tensor $T_{ab}$.  The vacuum limit is most conveniently represented in the electromagnetic decomposition ($E,H,D$) of the Weyl tensor. One can refer to Section \ref{sec:em} for details. Our main results are stated in the following theorems,
\begin{thm}\label{thm:hawking}
Let $S_l$ be the family of surfaces shrinking towards $p$ along lightcone cuts defined with respect to $(p,e_0)$ in an $n$-dimensional spacetime, the limits of the Hawking energy as $l$ goes to $0$ are
\begin{enumerate}
\item In non-vacuum,
\beq
\lim_{l\rightarrow 0}l^{-(n-1)}M_{H} (S_l) = \frac{\Omega_{n-2}}{n-1}T(e_0,e_0).\label{hawkingnonvac0}
\eeq
\item In vacuum or the stress tensor $T$ vanishes in an open set containing $p$,
\beq
\lim_{l\rightarrow 0}l^{-(n+1)}M_{H} (S_l) = \Q .\label{hawkingvac0}
\eeq
\end{enumerate}
where the tensors $T,E,H,D$ are evaluated at $p$.
\end{thm}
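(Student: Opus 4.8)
The plan is to adopt the Horowitz--Schmidt scheme. First I would introduce null normal coordinates adapted to $N_p$ in which the cuts are the level sets $S_l=\{l=\mathrm{const}\}$, the generators $\ell^+$ are affinely parameterised by $l$ and normalised at $p$ by \eqref{normalisation1} with a parallel-transported screen frame on each $S_l$, and $\ell^-$ is normalised by \eqref{normalisation2}; writing $\ell^+=e_0+\hat n$ at $p$, the directions $\hat n$ range over a unit sphere $S^{n-2}$. The $n$-dimensional Hawking(--Hayward) functional has the schematic form
\beq
M_H(S_l)\ \propto\ |S_l|^{\frac{n-3}{n-2}}\Bigl(1+\lambda_n\,|S_l|^{-\frac{n-4}{n-2}}\!\int_{S_l}\bigl(\theta^+\theta^-+\text{shear terms}\bigr)\,\dd A\Bigr),
\eeq
where the shear terms are Hayward's refinement and the dimensional constants (including $\lambda_n$) are the unique ones making $M_H$ vanish on round cuts of a Minkowski lightcone and reproduce the Schwarzschild--Tangherlini mass; fixing these correctly is what makes the leading surviving coefficients take the stated form. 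The strategy is then to Taylor-expand every ingredient in $l$ about $p$ and read off the first non-vanishing order.

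Next I would build the required expansions along the cone. The area density obeys $\pd_l\ln\sqrt{h}=\theta^+$, so it is controlled by $\theta^+$, which itself satisfies the Raychaudhuri equation
\beq
\pd_l\theta^+=-\frac{(\theta^+)^2}{n-2}-\sigma^+_{ab}\sigma^{+\,ab}-R_{ab}\ell^a\ell^b,\qquad \theta^+\sim\frac{n-2}{l},
\eeq
whose solution reads $\theta^+=\frac{n-2}{l}-\frac{l}{3}R_{ab}\ell^a\ell^b+O(l^2)$, with the shear $\sigma^+_{ab}=O(l)$ sourced by the Weyl curvature, its leading coefficient being the screen projection of $C_{acbd}\ell^c\ell^d$; hence $|\sigma^+|^2=O(l^2)$, feeding $\theta^+$ at $O(l^3)$ and the area density at $O(l^4)$. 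The delicate ingredient is $\theta^-$, obtained by integrating the cross-focusing (second Sachs) equation along $\ell^+$: its right-hand side carries the intrinsic scalar curvature of $S_l$, the mixed Ricci component $R_{ab}(\ell^+)^a(\ell^-)^b$ and Weyl terms, and, together with the symmetry of the two congruences at $p$, it yields $\theta^-=-\frac{n-2}{l}+O(l)$, where the $O(l)$ coefficient is linear in the curvature, the $O(l^2)$ coefficient linear in a first derivative of the curvature, and the $O(l^3)$ coefficient involves terms quadratic in the curvature (as well as second derivatives). Finally, the Hayward shear integrals $\int_{S_l}|\sigma^\pm|^2\,\dd A$ are $O(l^n)$, the same order (after combining with the $|S_l|$ factors) as the subleading part of $\int_{S_l}\theta^+\theta^-\,\dd A$.

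Substituting into $M_H(S_l)$ and collecting powers of $l$, the coefficient of $l^{n-1}$ is linear in the curvature at $p$ and reduces, after integration over the direction sphere $S^{n-2}$, to a Ricci expression: the curvature-linear Weyl contributions (the electric part entering as $E_{ij}\hat n^i\hat n^j$) integrate to zero because $E$ is trace-free, and with $\int_{S^{n-2}}\hat n^i\hat n^j\,\dd\Omega=\frac{\Omega_{n-2}}{n-1}\delta^{ij}$ the surviving Ricci combination assembles into $\frac{\Omega_{n-2}}{n-1}\,G(e_0,e_0)$; the Einstein equation converts $G$ into $T$, establishing \eqref{hawkingnonvac0}.

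In vacuum, or when $T$ vanishes in a neighbourhood of $p$, this coefficient vanishes and the expansion must be pushed further. The would-be $O(l^n)$ term is linear in a derivative of the curvature: its Ricci part vanishes near $p$, and its $\nabla$-Weyl part drops out of the angular average (the relevant contractions being traces of trace-free tensors or vanishing by the Weyl trace identities), so the next contribution arises at order $l^{n+1}$ and is purely quadratic in the Weyl tensor. Its sources are the $|\sigma^+|^2$ term in Raychaudhuri, the quadratic-curvature part of the $O(l^3)$ coefficient of $\theta^-$ from cross-focusing, and the Hayward shear integrals; each is a complete contraction of two Weyl tensors with copies of $\ell^+$, $\ell^-$ and the screen metric of $S_l$, so the angular integrations now require the rank-four moment $\int_{S^{n-2}}\hat n^{i_1}\hat n^{i_2}\hat n^{i_3}\hat n^{i_4}\,\dd\Omega$, after which every term becomes a linear combination of $E_{ij}E^{ij}$, $H^2$ and $D_{ijkl}D^{ijkl}$ in the electromagnetic decomposition relative to $e_0$. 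Bookkeeping the $n$-dependent coefficients then yields exactly \eqref{newquantity}; at $n=4$ the $D$-part is absent (the spatial Weyl piece being determined by $E$) and the expression collapses to the classical value $\frac1{90}Q_0$, i.e.\ to $Q$. I expect the main obstacle to be the $\theta^-$ expansion to the required order, where the cross-focusing integration must retain the Weyl terms and the subleading intrinsic geometry of $S_l$ correctly, together with the ensuing Weyl-tensor algebra, since it is there that the dimension-dependent numerators of $\Q$ are generated; a secondary point requiring care is the generalised Hawking normalisation, so that the $l^{n-1}$ and $l^{n+1}$ coefficients assemble without spurious contributions.
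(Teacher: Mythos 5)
Your expansion machinery is essentially the paper's: RNC about the vertex, a gauge-fixed null frame, Raychaudhuri and the shear evolution for $\theta^+,\sigma^+$, an expansion of $\theta^-$, discarding $\nabla C$ terms by angular parity/trace identities, and second and fourth moments on $S^{n-2}$ combined with the electromagnetic decomposition (the paper gets $\theta^-$ by differentiating the gauge-fixed RNC expansion of $\ell^-$ rather than integrating cross-focusing, a minor difference). Your non-vacuum step is sound and lands on \eqref{hawkingnonvac0} exactly as the paper does, via $\int_{S^{n-2}}n^in^j\,\dd\Omega_{n-2}=\frac{\Omega_{n-2}}{n-1}\delta_{ij}$ and the Einstein equation.

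The genuine gap is the functional you start from: it is not the $M_H$ of the theorem. For $n>4$ the paper defines the Hawking energy with the intrinsic scalar curvature integrated over the cut, $M_0\propto r_S\int_S\bigl(\tfrac{\R}{2}+\tfrac{n-3}{n-2}\theta^+\theta^-\bigr)\dd\sigma$, precisely because Gauss--Bonnet no longer converts $\int_S\R\,\dd\sigma$ into a constant; your schematic $|S_l|^{\frac{n-3}{n-2}}\bigl(1+\lambda_n|S_l|^{-\frac{n-4}{n-2}}\int\theta^+\theta^-\bigr)$ replaces $\int_S\R\,\dd\sigma$ by a pure power of the area. Your normalisation conditions do not repair this: both functionals vanish on round Minkowski cuts and reproduce the Misner--Sharp/Schwarzschild--Tangherlini mass (on round cuts $\int_S\R$ \emph{is} a function of the area), so they are not uniquely fixed by those requirements, and they first disagree exactly at the Weyl-squared order that controls the vacuum limit. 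With your area-power form, the $l^{n+1}$ coefficient comes only from the $O(l^4)$ area correction and the $O(l^2)$ correction to $\theta^+\theta^-$; evaluating it with the paper's Lemmas 3.7, 3.9 and 3.14 gives at $n=5$ the value $\frac{2E^2+4H^2+D^2}{1728}$, a manifestly nonnegative expression, whereas $\Q(n=5)=\frac{58E^2+12H^2-3D^2}{5184}$ can be negative; only at $n=4$ do the two definitions coincide (both collapsing to $Q_0/90$), which is why the 4D template misleads here. The missing ingredient is precisely the $\int_S\R\,\dd\sigma$ term: by the Gauss equation it becomes, in vacuum, $-\int_S C(\ell^+,\ell^-,\ell^+,\ell^-)(l)\,\dd\sigma$ plus a $\sigma^+_{ab}\sigma^{-ab}$ contribution, and the second-order expansion $C_{+-+-}(l)=C_{+-+-}-\tfrac{l^2}{3}C\indices{^\mu_{+-+}}C_{\mu+-+}+\dots$ together with $\int C_{+-+-}\,\dd\Omega_{n-2}=0$ is what generates the dimension-dependent numerators (and the non-positivity) of $\Q$. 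So as written your vacuum bookkeeping cannot ``yield exactly'' \eqref{newquantity}; you must first adopt the paper's definition (or prove your functional agrees with it to order $l^{n+1}$ along the cuts, which it does not), after which the rest of your outline does match the paper's computation.
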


As a corollary, one obtains the $n=4$ result by Horowitz and Schmidt. We see that the non-vacuum case (\ref{hawkingnonvac0}) agrees with our expectation. The coefficient in front of $T_{00}$ is exactly the volume of a flat unit sphere enclosed by $S_l$, and thus together it characterises the dominant matter energy content within the small sphere. However, the vacuum limit (\ref{hawkingvac0}) is not proportional to the BR superenergy $Q$ in any dimensions $n>4$. Surprisingly, the same limit is obtained for the BY energy in arbitrary dimensions.
\begin{thm}\label{thm:BY}
Let $S_l$ be the family of conformally flat lightcone cuts shrinking towards $p$ defined with respect to $(p,e_0)$ in an $n$-dimensional spacetime, the limits of the Brown-York energy as $l$ goes to $0$ are
\begin{enumerate}
\item In non-vacuum,
\beq
\lim_{l\rightarrow 0}l^{-(n-1)}M_{BY} (S_l) = \frac{\Omega_{n-2}}{n-1}T(e_0,e_0).\label{bynonvac}
\eeq
\item In vacuum or the stress tensor $T$ vanishes in an open set containing $p$,
\beq
\lim_{l\rightarrow 0}l^{-(n+1)}M_{BY} (S_l) = \Q .\label{byvac}
\eeq
\end{enumerate}
where the tensors $T,E,H,D$ are evaluated at $p$.
\end{thm}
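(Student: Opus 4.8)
The plan is to read the small sphere limit directly off the Hamilton--Jacobi expression for the Brown--York energy, reusing the curvature expansion already set up for Theorem~\ref{thm:hawking}. Recall that the $n$-dimensional Brown--York energy with the lightcone reference has the schematic form
\beq
M_{BY}(S_l) \;=\; \frac{1}{8\pi}\left( \int_{S_l} k_0\, \dd A \;-\; \int_{S_l} k\, \dd A \right),
\eeq
up to the dimension-dependent normalisation, where $k$ is the physical mean curvature of $S_l$ built from the spacetime data along $N_p$ and $k_0$ is the mean curvature of the isometric embedding of $(S_l,\gamma)$ into a lightcone of $n$-dimensional Minkowski space. By the discussion around Brinkmann's theorem, conformal flatness of $\gamma$ is exactly what guarantees the embedding, and since every cut of the Minkowski lightcone carries a metric conformal to the round metric on $S^{n-2}$, the reference term $k_0$ is a functional of the conformal factor of $\gamma$ alone, with the closed form used in the $n=4$ analyses of~\cite{brown1999canonical,yu2007limiting}.

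First I would introduce null coordinates adapted to $N_p$ as in the Horowitz--Schmidt construction recalled above, and Taylor expand in the affine parameter $l$ the induced metric $\gamma_{ab}(l)$, the null expansions $\theta^{\pm}$, the intrinsic connection of $S_l$ and the relevant second fundamental forms, with all coefficients written through the curvature tensor at $p$; upon specialising to vacuum these are repackaged via the electromagnetic decomposition $(E,H,D)$ of Section~\ref{sec:em}. This is precisely the machinery behind Theorem~\ref{thm:hawking}, so the physical term $\int_{S_l} k\, \dd A$ comes essentially for free once that computation is in hand.

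Next I would compute the reference term. Writing $\gamma_{ab}(l) = e^{2\phi}\, l^{2}\sigma_{ab}$ with $\sigma$ the unit round metric, one solves for $\phi = \phi_{(2)}\,l^{2} + \phi_{(4)}\,l^{4} + \cdots$ order by order from the metric expansion, substitutes into the explicit mean-curvature formula for a cut of the Minkowski lightcone, and obtains $\int_{S_l} k_0\, \dd A$ to the needed order. Forming the difference, the leading $O(l^{-1})$ pieces of $k_0$ and $k$ cancel; the angular integrals over $S^{n-2}$ of the remaining quadratic expressions in $E_{ab}$, $H_{abc}$ and $D$ are done with the standard moments of trace-free tensors on the sphere; and the first two surviving orders are $l^{n-1}$, which after using the Einstein equation yields $\tfrac{\Omega_{n-2}}{n-1}T(e_0,e_0)$, and $l^{n+1}$, which in vacuum yields $\Q$ as in~\eqref{newquantity}.

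The main obstacle is the reference term $k_0$. Unlike the physical quantities, it is not a Taylor coefficient of the spacetime metric but must be obtained by solving the isometric embedding perturbatively --- equivalently, by extracting the conformal factor of $\gamma$ --- and then one needs the delicate cancellation in $k_0 - k$ so that only the curvature-order contributions survive. One must also check consistency of the standing assumption that $S_l$ is conformally flat: the potential obstruction, the Cotton (or Weyl) tensor of $\gamma$, must either not enter at the orders relevant here or impose only relations among $E,H,D$ that do not affect the final coefficient. Once this is handled, the striking conclusion --- that the Brown--York limit coincides \emph{identically} with the Hawking limit of Theorem~\ref{thm:hawking}, in non-vacuum and in vacuum alike --- drops out of the way the reference and physical boundary terms combine; it is this exact agreement, and not merely the appearance of $\Q$, that the computation really establishes.
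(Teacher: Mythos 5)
Your treatment of the physical term and of the angular integrations is the same machinery the paper uses, but there is a genuine gap in how you handle the reference term, and it is precisely the point on which the whole theorem turns. You assert that the reference mean curvature $k_0$ ``is a functional of the conformal factor of $\gamma$ alone'' and propose to obtain it by solving the isometric embedding perturbatively. For the Brown--York energy this is not well defined: $M_{BY}$ subtracts the scalar mean curvature $\tilde{H}$ of $\tilde{S}$ \emph{inside a reference hypersurface} $\tilde{\Sigma}$, and the Brinkmann embedding into the Minkowski lightcone only fixes $\tilde{S}$ up to orthochronous Lorentz transformations, under which $\tilde{H}$ is \emph{not} invariant (the paper flags exactly this point). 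The missing ingredient is the matching condition $\star$ of Section~\ref{sec:def}, $\tilde{\theta}^+=\tilde{H}+\tilde{k}=\theta^+$, which ties the reference to the \emph{extrinsic} physical datum $\theta^+$; a purely intrinsic prescription (conformal factor only) is a different reference and would in general change the coefficient at order $l^{n+1}$. What you describe is essentially the KELY/Yu situation, where the reference $|\tilde{K}|=\sqrt{\tfrac{n-2}{n-3}\R}$ really is intrinsic; for Brown--York that identification is a conflation, and Theorem~\ref{thm:BY} as stated depends on the BLY-type choice $\star$.

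Once $\star$ is imposed, the ``main obstacle'' you describe disappears and no perturbative solution of the embedding is needed: since $\tilde{S}$ lies on a flat lightcone it is shear-free, and isometry gives $\tilde{\R}=\R$, so the Gauss equation yields the closed-form relation $\tilde{H}^2-(\theta^+-\tilde{H})^2=\tfrac{n-2}{n-3}\R$, i.e. $\tilde{H}=\tfrac{n-2}{n-3}\tfrac{\R}{2\theta^+}+\tfrac{\theta^+}{2}$, whence $\tilde{H}-H=\tfrac{n-2}{n-3}\tfrac{\R}{2\theta^+}+\theta^-$ with $H=-\theta^-+\tfrac{\theta^+}{2}$. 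Everything on the right is already expanded (Lemmas~\ref{lem:outerexp}, \ref{lem:innerexp}, \ref{lem:ricci}), and the sphere integrals of Lemma~\ref{lem:identities} plus the Einstein equation give (\ref{bynonvac}) and (\ref{byvac}) directly; conformal flatness enters only to guarantee that the reference exists at all. To repair your argument you would either have to adopt this Gauss-equation route, or carry out your conformal-factor computation \emph{and} additionally determine the boost/hypersurface in Minkowski space enforcing $\tilde{\theta}^+=\theta^+$ before reading off $k_0$ --- without that step the claimed agreement with the Hawking limit is not established.
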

\begin{rem}
The conformally flat lightcone cuts are a rather restrictive assumption we make in order for the reference to be well-defined. If such lightcone cuts do not exist for any $(p,e_0)$, then the small sphere limit above cannot be evaluated. The same applies to the Kijowski-Epp-Liu-Yau energy below.
\end{rem}
As noted by BLY \cite{brown1999canonical}, it is rather surprising that in four dimensions, both the Hawking energy and the BY energy yield the same vacuum limit as the two proposals are constructed from two totally different approaches. Here we choose the same lightcone reference as BLY, and we see that their small sphere limit agree in all dimensions. Our results suggest that the generalised Bel-Robinson superenergy $Q$, though unique, does not retain its gravitational energy interpretation beyond four dimensional spacetime. Nevertheless, we are reluctant to refer the new vacuum limit (\ref{hawkingvac0},\ref{byvac}) as a new candidate for the gravitational superenergy because this quantity is not always positive. In particular, whenever the magnetic-magnetic component of the Weyl tensor $D$ dominates over the other two contributions under some choice of $(p,e_0)$, we have a negative vacuum limit. This confirms the fact that the Hawking energy and the BY energy are plagued with non-positivity issues. In higher dimensions, it is more serious that non-positivity is manifested even in small lightcone cuts for some choices of $(p,e_0)$, which is not the case for $n=4$.

The Kijowski-Epp-Liu-Yau energy is a refinement of the Brown-York proposal in terms of the positivity. In order for the KELY energy to be defined, one needs two conditions: 1. the intrinsic Ricci scalar on $S_l$ is positive, $\R(l)>0$; 2. the Mean Curvature vector is spacelike, $\langle K,K \rangle >0$. These conditions are guaranteed on lightcone cuts $S_l$ for sufficiently small $l$. Again, for the ligtcone reference to be defined, we need to assume conformal flatness.

\begin{thm}\label{thm:LY}
Let $S_l$ be the family of conformally flat lightcone cuts shrinking towards $p$ defined with respect to $(p,e_0)$ in an $n$-dimensional spacetime, the limits of the Kijowski-Epp-Liu-Yau energy as $l$ goes to $0$ are
\begin{enumerate}
\item In non-vacuum,
\beq
\lim_{l\rightarrow 0}l^{-(n-1)}M_{KELY} (S_l) = \frac{\Omega_{n-2}}{n-1}T(e_0,e_0).\label{hawkingnonvac}
\eeq
\item In vacuum or the stress tensor $T$ vanishes in an open set containing $p$,
\beq
\lim_{l\rightarrow 0}l^{-(n+1)}M_{KELY} (S_l) = \Q-\frac{(n^2-n-3)E^2}{12(n^2-1)(n-2)^2(n-3)^2}.\label{hawkingvac}
\eeq
\end{enumerate}
where the tensors $T,E,H,D$ are evaluated at $p$.
\end{thm}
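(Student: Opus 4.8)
The plan is to reduce the Kijowski–Epp–Liu–Yau limit to the Brown–York limit of Theorem~\ref{thm:BY} plus a correction term coming from the different choice of reference energy, and to expand that correction to the relevant order in $l$. Recall that the KELY energy differs from the Brown–York energy essentially in how the reference term is computed: instead of matching the mean curvature of the isometric embedding, one uses the norm of the mean curvature vector $\sqrt{\langle K,K\rangle}$ on $S_l$ against the analogous quantity for the lightcone embedding into Minkowski space. So the first step is to write $M_{KELY}(S_l)$ as an integral over $S_l$ of $\sqrt{\langle K,K\rangle}$ (suitably normalised by $e_0$) minus the reference contribution, and to organise this so that the physical term can be reused from the Brown–York computation. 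The two conditions stated before the theorem, namely $\R(l)>0$ and $\langle K,K\rangle>0$ for small $l$, are exactly what guarantees all these square roots and the Brinkmann embedding are well-defined; I would invoke the earlier discussion (conformal flatness of the cuts, existence and uniqueness of the lightcone embedding up to orthochronous Lorentz transformations) to justify that the reference is unambiguous.

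Next I would set up the small-sphere expansions already in hand. Along the lightcone cuts defined by $(p,e_0)$, the induced metric, the second fundamental forms along $\ell^\pm$, the connection one-form on the normal bundle, and the intrinsic curvature of $S_l$ all admit asymptotic expansions in the affine parameter $l$ whose coefficients are built from the Riemann/Weyl tensor at $p$ contracted with the null directions; in vacuum these reduce to the electromagnetic pieces $E,H,D$. These are precisely the expansions used to prove Theorems~\ref{thm:hawking} and~\ref{thm:BY}, so I would quote them. The key new computation is the expansion of $\langle K,K\rangle$ and hence of $\sqrt{\langle K,K\rangle}$: writing $\langle K,K\rangle = -2\,\theta^+\theta^-$ in terms of the two null expansions (with appropriate normalisation), one expands $\theta^\pm$ to the needed order, multiplies, and takes the square root by a binomial expansion. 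The same is done for the reference lightcone embedding, whose geometry is that of a round sphere on a Minkowski lightcone and is therefore explicit. Integrating the difference over $S_l$ against the volume form, using the standard angular integrals of products of two, three and four copies of the null normal over the unit $(n-2)$-sphere (these produce the $\Omega_{n-2}$ factors and the combinatorial coefficients), gives the leading behaviour.

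The non-vacuum case (\ref{hawkingnonvac}) should fall out quickly: at order $l^{n-1}$ only the trace part survives, and since KELY, BY and Hawking all linearise to the same Hamilton–Jacobi surface integral at this order, the coefficient $\Omega_{n-2}/(n-1)\,T(e_0,e_0)$ is forced, exactly as in the proofs of the earlier theorems — I would simply remark that the reference correction is of higher order in $l$ and hence does not contribute here. For the vacuum case (\ref{hawkingvac}), the strategy is: take the Brown–York vacuum limit $\Q$ from Theorem~\ref{thm:BY}, then add the difference between the KELY reference term and the BY reference term expanded to order $l^{n+1}$. Because both references are lightcone embeddings of the \emph{same} surface $S_l$, much cancels, and the residual is a single Weyl contraction; carrying out the angular integral and collecting constants should yield exactly $-\dfrac{(n^2-n-3)E^2}{12(n^2-1)(n-2)^2(n-3)^2}$, and one checks that at $n=4$ this correction is $-\tfrac{1}{630}E^2$, matching Yu's extra term relative to the Hawking/BY value $\tfrac{1}{90}Q_0$.

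The main obstacle I anticipate is bookkeeping precision in the $\sqrt{\langle K,K\rangle}$ expansion: the mean curvature vector norm is a product of two quantities each of which must be expanded two orders beyond leading, and the square root mixes these orders, so terms quadratic in the subleading data (which is where $E^2$, $H^2$, $D^2$ live) are easy to mishandle — in particular the cross terms between the $\ell^+$ and $\ell^-$ expansions and the $l$-dependence hidden in the area form and in the normalisation condition (\ref{normalisation1}). A secondary subtlety is making sure the reference embedding is expanded consistently: one must solve the Brinkmann embedding for the perturbed round metric on $S_l$ to the same order, and a naive ``round sphere'' approximation drops precisely the $E^2$ term that distinguishes KELY from BY. Getting these two expansions to the same order and subtracting them cleanly is the crux; once that is done, the angular integrations and the final algebra are routine, if tedious.
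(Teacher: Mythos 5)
Your overall framework (expand $|\tilde{K}|-|K|$ on the cuts, with $|K|=\sqrt{-2\theta^+\theta^-}$, binomial-expand the square roots, then use the angular integrals) is the route the paper takes, but two concrete ingredients are missing or wrong. First, you never give a workable way to obtain the reference term $|\tilde{K}|$ to order $l^3$: you propose to ``solve the Brinkmann embedding for the perturbed round metric on $S_l$'', which is both unnecessary and unexplained. The paper's key step is that the embedded image lies on a Minkowski lightcone and is therefore shear-free, so the contracted Gauss equation (Lemma \ref{lem:gauss}) gives $|\tilde{K}|^2=\frac{n-2}{n-3}\R$ with $\R$ the \emph{intrinsic} scalar curvature of $S_l$ (Lemma \ref{lem:ricci}); no embedding data is needed at all. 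Without this identity (or an equivalent), your reference expansion at the order where the $E^2$ term lives is simply not available.

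Second, your vacuum strategy as stated — take $\Q$ from Theorem \ref{thm:BY} and ``add the difference between the KELY reference term and the BY reference term'' — drops the difference of the \emph{physical} terms, $|K|-H$, which is of the same order and essential. Concretely, the correction arises from the quadratic pieces of the two square-root expansions: $|\tilde{K}|=\sqrt{\tfrac{n-2}{n-3}\R}$ contributes $-\tfrac{l^3}{288(n-2)}\bigl(\tfrac{n(n-1)}{n-3}\bigr)^2C_{+-+-}^2$, while $|K|=\sqrt{-2\theta^+\theta^-}$ contributes the analogous $(n+2)^2$ piece (equivalently $-k^2/(2H)$ with $k=\theta^+/2+\theta^-=O(l)$ in vacuum); only their combination produces the factor $n^2-n-3$ via $\bigl(\tfrac{n(n-1)}{n-3}\bigr)^2-(n+2)^2=\tfrac{12(n^2-n-3)}{(n-3)^2}$, and the Brown--York side has no such square-root terms to cancel against. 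Following your vacuum paragraph literally therefore yields the wrong coefficient, even though your second paragraph (expanding $\sqrt{\langle K,K\rangle}$ itself) is the correct plan. A minor but telling symptom: your $n=4$ check value $-\tfrac{1}{630}E^2$ does not match the theorem, whose correction at $n=4$ is $-\tfrac{9E^2}{720}=-\tfrac{1}{80}E^2$. The non-vacuum part is fine provided ``the reference correction is higher order'' is read as the statement that $(|\tilde{K}|-\tilde{H})-(|K|-H)=O(l^3)$, so the leading $l^{n-1}$ limit coincides with the Brown--York one; note that the reference itself is \emph{not} higher order — it supplies the $\R$ half of the Gauss-equation combination that assembles $G(e_0,e_0)$.
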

\begin{rem}
In four dimensions, the vacuum limit is the BR superenergy $Q_0$ with an extra term proportional to $E^2$. We see that in higher dimensions, the same pattern holds and $Q_0$ generalises to $\Q$ instead of $Q$ as for the Hawking energy and the BY energy. When $n=4$, the vacuum limit is positive but fails to remain positive in higher dimensions if $D^2$ dominates. 
\end{rem}

In section 2, we review the electromagnetic decompositions of the Weyl tensor and the Bel-Robinson superenergy in arbitrary dimensions; in section 3, the gauge freedom on the lightcone cuts is fixed and expansions of relevant geometric quantities are computed; in section 4, the generalisations of QLE's are defined and the assumptions are discussed; in section 5, we evaluate all the small sphere limits and prove the claimed theorems; and we finish by a short discussion in section 6.\\

 \emph{Notations.--}An $n$-dimensional spacetime is denoted as $(M^n,g)$ and the Minkowski spacetime is denoted as $(\M^n,\eta)$. Geometric quantities with a tilde $\tilde{ }$ live in $\M^n$, e.g. $\tilde{N_p}$ is a lightcone at $p$ in $\M^n$. We denote the Riemann curvature as $R_{abcd}$, the Ricci curvature as $Ric$ or $R_{ab}$ and Ricci scalar as $R$. The induced metric on a codimension-two surface $S$ is $h_{ab}$. $\R$ denotes the scalar curvature of $S$. $K^a$ denotes its mean curvature vector, $H$ denotes the mean curvature of $S$ as embedded in a $(n-1)$-hypersurface $\Sigma$ and $k$ denotes the mean curvature of $\Sigma$ as embedded in $(M^n,g)$. The outer and inner null generators on a spacelike closed $(n-2)$-surface are $\ell^+, \ell^-$ respectively. Their flat parts (to be precisely defined later) contracting with tensors are abbreviated as e.g. $R_{+-}=Ric(\l^+,\l^-).$  We use $a,b,c,\dots$ for abstract indices, $\mu,\nu,\alpha,\dots = 0,1,\dots,n-1$ for full-dimensional objects and $i,j,k,\dots = 1,\dots,n-1$ for codimension one objects in coordinates.  In Riemann Normal Coordinate (RNC)$\{x^\mu\}$ expansions, indices are raised or lowered by the Minkowski metric $\eta$. We do not distinguish upper/lower indices for contraction with respect to the Euclidean metric. (Square)brackets around indices indicate (anti-)symmetrisation.

\section{Review of the electromagnetic decomposition of the Weyl tensor and the Bel-Robinson superenergy}\label{sec:em}
We are interested in integral quantities on a lightcone cut $S_l$ in a perturbative regime. In non-vacuum, the curvature perturbations are characterised by Ricci-related quantities, like $R, R_{ab}, G_{ab}$. In vacuum, they can be organised by the electromagnetic decomposition of the Weyl tensor. Our discussion on the electromagnetic decomposition shall only include what we need. One can refer to \cite{senovilla2000super} for more details. 

\begin{defn}\label{emparts}
Given some timelike vector $e_0$ at $p$, in adapted coordinates $\{x^0,x^i\}$ where $\partial_{x^0}=e_0$, the Weyl tensor $C$ can be decomposed into spatial tensors
\beq
E_{ij} := C_{0i0j},\;\;\;\;\; H_{ijk} := C_{0ijk} ,\;\;\;\;\; D_{ijkl} := C_{ijkl},
\eeq
\end{defn}
where $E_{ij}$ is the electric-electric part, $H_{ijk}$ is the electric-magnetic part and $D_{ijkl}$ is the magnetic-magnetic part. 

In four dimensions, the Bel-Robinson tensor~\cite{Bel,robinson1997bel} is
\beq
Q^{(4)}_{abcd} = C_{aecf} C_b{}^e{}_d{}^f + C_{aedf} C_b{}^e{}_c{}^f - \frac{1}{2}g_{ab}C_{cefg}C\indices{_d^{efg}}
\eeq
which is defined in a way similar to how the electromagnetic stress tensor is built from the electromagnetic tensor. The BR tensor in four dimensions enjoys many nice properties, such as being traceless, totally symmetric and satisfying a conservation law \cite{senovilla2000super}. Most importantly, it satisfies the dominant property, which means that the tensor $Q^{(4)}_{abcd}$ contracted with any four future directed causal vectors is non-negative. The superenergy is defined as
\beq
Q_0:=Q^{(4)}(e_0,e_0,e_0,e_0)=E^2+B^2, \label{br4d}
\eeq
where $E^2:=E_{ij}E_{ij}, B_{ij}:=\frac{1}{2}\epsilon_{jkl}H\indices{_i_k_l}$. This form suggests the name `superenergy' analogous to the field energy in electrodynamics due to its different dimension. In fact, using dimensional analysis, one can argue that in four dimensional vacuum any Lorentz invariant quasilocal energy expression for a small sphere must be proportional to $Q_0$ at leading order \cite{szabados2009quasi}. This justifies the interpretation of $Q_0$ as purely gravitational energy. It is also a useful tool in studying the dynamics of general relativity such as the nonlinear stability of the Minkowski spacetime \cite{christodoulou2014global}.

By demanding the four-rank tensor being dominant and quadratic in the Weyl tensor, Senovilla discovered the following generalisation of the BR tensor in higher dimensions~\cite{senovilla2000super}.
\begin{align}
Q_{abcd} &= C_{aecf} C_b{}^e{}_d{}^f + C_{aedf} C_b{}^e{}_c{}^f 
- \frac{1}{2} g_{ab}  C_{gecf} C^{ge}{}_d{}^f-\frac{1}{2} g_{cd} C_{aegf} C_b{}^{egf}
+\frac{1}{8} g_{ab} g_{cd} C_{efgh}C^{efgh}. \label{eqn:Tarb}
\end{align}
$Q$ is not the unique tensor which satisfies above conditions when $n>5$, but the corresponding superenergy is. Namely,
\beq
Q:=Q(e_0,e_0,e_0,e_0)
\eeq
is an unique generalisation of the standard BR superenergy $Q_0$ in $n\geq 4$ given the tensor $Q_{abcd}$ is dominant and quadratic in Weyl. Using the Definition \ref{emparts}, one can rewrite $Q$ to be manifestly non-negative
\beq \label{eqn:W}
Q = \frac12\left[E^2+H^2+\frac14 D^2\right],
\eeq
where $E^2:=E_{ij}E_{ij}, H^2:=H_{ijk}H_{ijk}, D^2:=D_{ijkl}D_{ijkl}$. Note that $E_{ij}$ is basically the trace of $D_{ijkl}$ and $D_{ijkl}$ contain more information via the off diagonal entries for $n>4$. When $n=4$, they are equivalent and $D^2=4E^2, H^2=2B^2$, so $Q$ equals to $Q_0$. 

Even though $Q$ is unique, it may not necessarily acquire the physical meaning of a gravitational energy as $Q_0$ in four dimensions. Our results show that such an physical characterisation in terms of QLE is indeed missing here, where $\Q=\frac{(6n^2-20n+8)E^2+6(n-3)H^2-3D^2}{36(n-3)(n-2)(n^2-1)}$ replaces $Q$.\\

Before closing this section, we note some useful identities relating the electromagnetic parts,
\begin{lem}\label{lem:emidentity}
\begin{align}
D\indices{_{ik}_j_k}=E_{ij},\;\;\;H\indices{_k_l_i}H_{lki}=\frac{H^2}{2},\;\;\;H\indices{_k_i_l}H_{min}\delta^{(4)}_{klmn}=\frac{3}{2}H^2,\\
E_{kl}E_{mn}\delta^{(4)}_{klmn}=2E^2,\;\;\;\;\;D\indices{_p_k_i_l}D_{pmin}\delta^{(4)}_{klmn}=\frac{3}{2}D^2+E^2,
\end{align}
where $\delta^{(4)}_{klmn}:=\delta_{kl}\delta_{mn}+\delta_{km}\delta_{ln}+\delta_{kn}\delta_{lm}$.
\end{lem}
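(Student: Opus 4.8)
\emph{Proof plan.} Every identity is a purely algebraic consequence of the symmetries of $C_{abcd}$, so the strategy is to reduce each contraction to one of the independent scalars $E^2$, $H^2$, $D^2$. The three tools are: (i) the index symmetries of the Weyl tensor (antisymmetry in each pair, symmetry under exchange of pairs), which give $E_{ij}=E_{ji}$, $H_{ijk}=-H_{ikj}$ and the Riemann-type symmetries of $D_{ijkl}$; (ii) the first Bianchi identity $C_{a[bcd]}=0$, equivalently $H_{ijk}+H_{jki}+H_{kij}=0$ and $D_{ijkl}+D_{iklj}+D_{iljk}=0$; (iii) the tracelessness of $C$ in $n$ dimensions, $g^{\mu\nu}C_{\mu a\nu b}=0$. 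The first identity, $D_{ikjk}=E_{ij}$, drops out of (iii): expanding $g^{\mu\nu}C_{\mu i\nu j}=0$ in the adapted frame gives $\sum_k C_{kikj}=C_{0i0j}$, and the left-hand side is $D_{ikjk}$ after one use of the pair antisymmetries; running the same computation on $g^{\mu\nu}C_{\mu 0\nu i}=0$ produces the vanishing trace $\sum_k H_{kki}=0$ (hence $\sum_k H_{kik}=0$), which is needed below.

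For the two $H$-identities I would first derive the cyclic contraction $H_{abc}H_{bca}=-\tfrac12H^2$ by contracting the Bianchi relation with $H_{abc}$ (the two non-trivial terms coincide after relabelling the dummy indices). Identity 2 is then one step: $H_{kli}H_{lki}=-H_{kli}H_{lik}=-H_{abc}H_{bca}=\tfrac12H^2$. For identity 3 I expand $\delta^{(4)}_{klmn}$ term by term: $\delta_{kl}\delta_{mn}$ factors through $\sum_kH_{kik}=0$ and drops; $\delta_{km}\delta_{ln}$ gives $H^2$; and $\delta_{kn}\delta_{lm}$ gives $H_{kil}H_{lik}$, which collapses to $H_{kli}H_{lki}=\tfrac12H^2$ after applying the last-slot antisymmetry to each factor. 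The total is $\tfrac32H^2$.

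The $E$-identity is immediate: in the expansion of $\delta^{(4)}_{klmn}$ the first term is $(\operatorname{tr}E)^2=0$ and the other two each give $E^2$ by symmetry of $E$. For identity 5 I again expand $\delta^{(4)}_{klmn}$. The $\delta_{kl}\delta_{mn}$ term is $\big(\sum_kD_{pkik}\big)\big(\sum_mD_{pmim}\big)=E_{pi}E_{pi}=E^2$ using identity 1; the $\delta_{km}\delta_{ln}$ term is the full contraction $D^2$; and the $\delta_{kn}\delta_{lm}$ term is $D_{pkil}D_{plik}$, which after applying the last-slot antisymmetry to both factors becomes the crossed contraction $D_{abcd}D_{acbd}$ with $p,i$ spectators. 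The one fact worth isolating is $D_{abcd}D_{acbd}=\tfrac12D^2$, proved exactly as the classical Riemann identity $R_{abcd}R^{acbd}=\tfrac12|\mathrm{Riem}|^2$: contract the Bianchi relation for $D$ with $D_{abcd}$ and use the pair symmetries to identify the two crossed terms. Summing the three contributions gives $E^2+\tfrac32D^2$.

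\emph{Main obstacle.} None of this is deep; the only step needing genuine care is the last contraction in identity 5 — casting the $\delta_{kn}\delta_{lm}$ piece of $D_{pkil}D_{pmin}$ into the standard crossed form and invoking $D_{abcd}D_{acbd}=\tfrac12D^2$. It is important to note that this identity relies only on the pair (anti)symmetries and the first Bianchi identity of $D$, and not on any tracelessness — which is essential here, because $D_{ijkl}$, unlike the full Weyl tensor, is \emph{not} trace-free: its trace is precisely $E_{ij}$.
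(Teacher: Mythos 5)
Your proposal is correct and follows essentially the same route as the paper: expand $\delta^{(4)}_{klmn}$ term by term and reduce each piece using the pair (anti)symmetries, the first Bianchi identity (your cyclic contraction $H_{abc}H_{bca}=-\tfrac12H^2$ is just a repackaging of the paper's $z=H^2-z$ trick, and $D_{abcd}D_{acbd}=\tfrac12D^2$ is exactly the ``similar calculation'' the paper invokes for $D_{pkil}D_{plik}$), together with Weyl tracelessness for $D_{ikjk}=E_{ij}$. The only difference is cosmetic: you justify explicitly that the pure-trace terms $\bigl(\sum_k H_{kik}\bigr)^2$ and $(\operatorname{tr}E)^2$ vanish, which the paper drops silently, and you correctly flag that the crossed $D$-identity needs no tracelessness of $D$.
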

\begin{proof}
The first identity simply follows from
\beq
D\indices{_{ik}_j_k}=-C\indices{_{i0}_j^0}=C\indices{_{i0j0}}=E_{ij}.
\eeq
For the second identity, set $z$ as
\beq
z=H\indices{_k_l_i}H_{lki}=-H_{kli}(H_{kil}+H_{ilk})=H^2-H_{kli}H_{ilk}=H^2-H_{kil}H_{ikl}=H^2-z.\label{cal2}
\eeq
So we have $z=H^2/2$. Then consider
\begin{align}
H\indices{_k_i_l}H_{min}\delta^{(4)}_{klmn}=&H\indices{_k_i_l}H_{kil}+H\indices{_k_i_l}H_{lik}=H^2+\frac12H^2=\frac32H^2.
\end{align}
Also,
\beq
E_{kl}E_{mn}\delta^{(4)}_{klmn}=E_{kl}E_{kl}+E_{kl}E_{lk}=2E^2.
\eeq
Lastly, consider
\beq
D\indices{_p_k_i_l}D_{pmin}\delta^{(4)}_{klmn}=D_{pkik}D_{pnin}+D_{pkil}D_{pkil}+D_{pkil}D_{plik}=E^2+D^2+\frac12 D^2=E^2+\frac32 D^2.
\eeq
where $D_{pkil}D_{plik}=D^2/2$ follows from similar calculations as in (\ref{cal2}).
\end{proof}

\section{The geometry of small codimension-two submanifolds}
 In four dimensions, the standard method to evaluate the small sphere limit uses the Newman-Penrose formalism, or more generally the Geroch--Held--Penrose (GHP) formalism. Though we are aware of its generalisation to higher dimensions \cite{durkee2010generalization}, we choose to work with the standard tensorial framework of general relativity for greater accessibility. 

We use the Riemann Normal Coordinates $\{x^\mu\}$ \cite{brewin2009riemann} and choose the origin to be the lightcone vertex $p$. The metric on the lightcone cut $S_l$ reads
\begin{align}
g_{\al\be}(l) &= \eta_{\al\be}-\frac{l^2}3 R_{\al +\be +} 
-\frac{l^3}6 \nabla_+ R_{\al+\be+}+ l^4\left(\frac2{45} R\indices{^\ga_+_\al_+} R_{\ga +\be +}
-\frac1{20} \nabla_+ \nabla_+
R_{\al +\be +}\right) + O(l^5). 
\end{align}
where $+$ denotes contraction with $\l^+$ and all the curvature tensors are evaluated at $p$. For simplicity, we can omit some terms in the above expression that will be irrelevant after the integration on $S_l$. Terms involving covariant derivatives of the Riemann tensor will eventually vanish due to the following lemma 
\begin{lem}
\begin{align}
\delta_{ij}\nabla_0 C_{i0j0}=0,\;\;\; \delta^{(4)}_{ijkl}\nabla_k C_{ijl0}=0,\;\;\;\delta^{(4)}_{ijkl}\nabla_k \nabla_l C_{i0j0}=0,\;\;\;\delta_{ij}\delta^{(4)}_{klmn}\nabla_k \nabla_l C_{imjn}=0.
\end{align}
\end{lem}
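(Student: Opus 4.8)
The plan is to obtain all four identities from the algebraic symmetries of the Weyl tensor (antisymmetry within each pair, interchange of the two pairs, the first Bianchi identity), its trace-freeness $g^{ac}C_{abcd}=0$, and --- for the two identities carrying second derivatives --- the contracted second Bianchi identity, which in the regime where this lemma is applied (the vacuum order of the small-sphere expansion) reduces to $\nabla^aC_{abcd}=0$. I work throughout in the RNC based at $p$, so that $g(p)=\eta$ and $\nabla g=0$; since covariant differentiation preserves the pair antisymmetries of $C$, any Weyl component with a repeated index $0$ inside one of its antisymmetric pairs --- $C_{0000}$, $C_{00j0}$, $C_{ij00}$, and the like --- vanishes identically together with all of its covariant derivatives. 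I will also use the trivial fact that a totally symmetric tensor contracted against an object antisymmetric in two of the contracted slots gives zero.

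The first two identities are immediate. Since $\nabla g=0$, differentiating the tensor identity $g^{\mu\nu}C_{\mu 0\nu 0}=0$ along $e_0$ gives $g^{\mu\nu}\nabla_0 C_{\mu 0\nu 0}=0$ everywhere; evaluating at $p$, the $\mu=\nu=0$ contribution is $\nabla_0 C_{0000}=0$, which leaves $\delta^{ij}\nabla_0 C_{i0j0}=0$. For the second, $\nabla_k C_{ijl0}$ is antisymmetric under $i\leftrightarrow j$, and $i,j$ are two of the four slots contracted against the totally symmetric $\delta^{(4)}_{ijkl}$, so the contraction vanishes.

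For the third identity, split $\delta^{(4)}_{ijkl}=\delta_{ij}\delta_{kl}+\delta_{ik}\delta_{jl}+\delta_{il}\delta_{jk}$. In the first term $\delta^{ij}$ contracts the trace slots of $C_{i0j0}$, and differentiating $g^{\mu\nu}C_{\mu 0\nu 0}=0$ twice shows $\delta^{ij}\nabla_k\nabla_l C_{i0j0}|_p=0$, so it drops. The remaining two terms equal $\sum_{i,j}\nabla_i\nabla_j C_{i0j0}$ and $\sum_{i,j}\nabla_j\nabla_i C_{i0j0}$, which coincide after relabelling $i\leftrightarrow j$ and using the pair interchange $C_{i0j0}=C_{j0i0}$ (inherited by the iterated covariant derivative), so the identity is equivalent to $\sum_{i,j}\nabla_i\nabla_j C_{i0j0}|_p=0$. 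To see this, note that $\nabla^\mu C_{\mu 0 j 0}=0$ in vacuum, hence $g^{\mu\nu}\nabla_k\nabla_\nu C_{\mu 0 j 0}=0$ everywhere; evaluated at $p$ the $\mu=0$ part is $\nabla_k\nabla_0 C_{00j0}=0$, leaving $\sum_i\nabla_k\nabla_i C_{i0j0}|_p=0$ for all $k,j$, and contracting $k$ with $j$ (and relabelling as before) yields the claim. Finally, the fourth identity reduces to the third: the $\delta_{ij}$ contraction is a Weyl trace, so twice-differentiating $g^{\mu\nu}C_{\mu m\nu n}=0$ turns $\delta^{ij}\nabla_k\nabla_l C_{imjn}$ into $\nabla_k\nabla_l C_{0m0n}$ at $p$; since $C_{0m0n}=C_{m0n0}$ and $\delta^{(4)}$ is totally symmetric, this is exactly $\delta^{(4)}_{ijkl}\nabla_k\nabla_l C_{i0j0}|_p$, the left-hand side of the third identity.

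The only step that is not pure index bookkeeping is the third identity, which genuinely invokes the vacuum contracted Bianchi identity. The point to watch there is that when the $g^{\mu\nu}$-contraction appearing in $\nabla^\mu C_{\mu 0 j 0}$ (after one more derivative) is evaluated at $p$ and split into spatial and timelike parts, the timelike ($\mu=0$) piece is a Weyl component with a repeated $0$ in an antisymmetric pair and hence drops --- without this one would not recover the purely spatial double divergence that is needed. Once this is noticed the remaining computation is routine.
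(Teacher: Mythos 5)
Your proof is correct and takes essentially the same route as the paper's: tracelessness of the Weyl tensor for the first identity, the antisymmetry--symmetry clash for the second, the (vacuum) contracted Bianchi identity $\nabla^{a}C_{abcd}=0$ combined with tracelessness for the third, and reduction of the fourth to the third via the spatial trace $\delta_{ij}C_{imjn}=C_{0m0n}$. Your explicit treatment of the $\mu=0$ pieces and of the vacuum assumption in the Bianchi step merely spells out what the paper leaves implicit.
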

\begin{rem}
In all the perturbative calculations we encounter later, the nontrivial terms involving derivatives of the Riemann tensor up to the order of curvature squared, which is the leading order in the vacuum case, reduce to the above forms after spherical integration. Similar arguments are also given in \cite{jacobson2018area}.
\end{rem}
\begin{proof}
The first identity vanishes as the Weyl tensor is traceless. The second identity vanishes as $i,j$ are antisymmetric in $C_{ijl0}$ and symmetric in $\delta^{(4)}_{ijkl}$.  Consider now
\begin{align}
\delta^{(4)}_{ijkl}\nabla_k \nabla_l C_{i0j0}=2\nabla_i\nabla_j C_{j0i0}=2\nabla_i\nabla_\al C\indices{^\al_{0i0}}=0
\end{align}
where the last equality is due to the Bianchi identity.  Similarly,
\begin{align}
\delta^{(4)}_{klmn}\nabla_k \nabla_l C_{imin}=\delta^{(4)}_{klmn}\nabla_k \nabla_l C_{m0n0}=2\nabla_k \nabla_l C_{l0k0}=0.
\end{align}
\end{proof}

We shall henceforth use the simplified version of the metric and leave out such derivative terms in all RNC expansions,
\beq
g_{\al\be}(l) = \eta_{\al\be}-\frac{l^2}3 R_{\al +\be +} + \frac{2l^4}{45} R\indices{^\ga_+_\al_+} R_{\ga +\be +} + O(l^5). \label{metric}
\eeq
and its inverse is
\beq
g^{\al\be}(l) = \eta^{\al\be}+\frac{l^2}3 R\indices{^\al_+^\be_+} +\frac{l^4}{15}R\indices{^\ga_+^\al_+} R\indices{_\ga_+^\be_+} + O(l^5).\label{inversemetric}
\eeq

The Levi-Civita connection is given by \cite{brewin2009riemann},
\beq\label{connection}
\Gamma^\mu_{\al\be}(l) = -\frac{2l}3R\indices{^\mu_{(\al\be)+}}+\frac{2l^3}{15}R\indices{_+^\ga_{+(\al}}R\indices{_{\be)}^\mu_+_\ga} +\frac{8l^3}{45}R\indices{_{(\al |+|\be)}^\ga}R\indices{_+^\mu_+_\ga} +O(l^4).
\eeq

The induced metric on $S_l$ reads
\beq \label{hmetric}
h_{ab} = g_{ab}(l) + \ell^+_a\ell^-_b + \ell^-_a\ell^+_b
\eeq
where $\ell^+, \ell^-$ are the outer (outgoing) and inner (ingoing) null generators on $S_l$. We shall refer to $\{\ell^+,\ell^-\}$ as the null frame.

The zeroth order contribution for $h$ is the Minkowski counterpart $\h$,
\beq
\h_{ab} = \eta_{ab}+\l^+_{a}\l^-_{b} + \l^-_{a}\l^+_{b}
\eeq
Note that by definition
\beq
h_{ab}\ell^{\pm b}=0, \;\;\h_{ab}\l^{\pm b}=0,
\eeq
and any objects orthogonal to $S_l$ will vanish under contractions with the induced metric.

We are interested in the QLE evaluated on the lightcone cuts, so it is natural to formulate all the quantities of interest in terms of the null frame variables defined on the lightcone, such as the expansion $\theta^\pm$ and the shear $\sigma^\pm_{ab}$. We should first fix the null frame $\{\ell^+,\ell^-\}$. For a small lightcone cut $S_l$, the leading contribution to the null generators in the RNC expansion are the their flat counterparts $\l^\pm$. By making a gauge choice for $\l^{\pm\mu}$ consistent with normalisations (\ref{normalisation1},\ref{normalisation2}), we can fix the exact expansions of the null tangents that are normalised and hypersurface orthogonal. 
\begin{prop}\label{lem:gauge}
Choose the leading contribution $\l^\pm_\mu$ to the outer and inner null generators as
\beq
\l^{+\mu} := (1, n^i),\;\; \l^{-\mu} :=\frac12(1,-n^i),
\eeq
the RNC expansions of $\ell^\pm_\mu$ restricted on $S_l$ are given by:
\beq
\begin{aligned}
\ell^{+\mu} =& (1, n^i) = \l^{+\mu}, \\
\ell^+_\mu =& (-1, n^i)  = \l^+_\mu :=\eta_{\mu\nu}\l^{+\nu}, \\
\ell^{-\mu} =& \l^{-\mu}+\frac{l^2}{6}R_{+-+-}\ell^{+\mu}+\frac{l^2}{3}R\indices{_{+-+}^\mu}+\frac{l^4}{30}R\indices{^\nu_{+-+}}R_{\nu +-+}\ell^{+\mu}+\frac{l^4}{15}R\indices{^\nu_{+-+}}R\indices{_{\nu +}^\mu_+}+O(l^5),\\
\ell^{-}_\mu =& \l_{\mu}^-+\frac{l^2}{6}R_{+-+-}\ell^+_{\mu}+\frac{l^4}{30}R\indices{^\nu_{+-+}}R_{\nu +-+}\ell^+_{\mu}+O(l^5),
\end{aligned}
\eeq
where $n^i$ is a normalised spacelike vector indicating the spatial direction, $\ell^\pm_{\mu}=g_{\mu\nu}\ell^{\pm\nu}, \l^\pm_{\mu}=\eta_{\mu\nu}\l^{\pm\nu},$ and we use abbreviations such as $R_{+-+-}=R(\l^+,\l^-,\l^+,\l^-),$ etc.
\end{prop}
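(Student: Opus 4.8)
The plan is to determine the expansions of $\ell^\pm$ order by order in $l$ by imposing three sets of conditions: (i) the normalisations (\ref{normalisation1}) and (\ref{normalisation2}); (ii) the requirement that $\ell^+$ be tangent to the affinely parameterised null generators of the lightcone $N_p$, i.e. $\ell^{+\mu}=\dd x^\mu/\dd l$ along a generator; and (iii) the requirement that $\ell^-$ be the unique future-directed null normal to $S_l$ satisfying (\ref{normalisation2}), i.e. $\ell^-_\mu$ annihilates every vector tangent to $S_l$. The starting point is the observation (already used to motivate the gauge fixing) that to leading order $\ell^\pm$ coincide with the flat null vectors $\tilde\ell^\pm$, and that $\tilde\ell^+$ is exactly the RNC radial null vector $(1,n^i)$ with $n^i$ a unit Euclidean vector — this is a straight null geodesic through the origin in RNC, so the first line $\ell^{+\mu}=\tilde\ell^{+\mu}$ is exact, requiring no computation. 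Lowering with the metric (\ref{metric}) then gives $\ell^+_\mu$; here one must check that the $O(l^2)$ and $O(l^4)$ corrections drop out, which follows because $g_{\al\be}(l)\tilde\ell^{+\be}$ differs from $\eta_{\al\be}\tilde\ell^{+\be}$ only by terms $\propto R_{\al+++}$ and $R\indices{^\ga_+_\al_+}R_{\ga+++}$, and $R_{+\,+\,+\,+}=0$, $R\indices{_+_+_+_+}=0$ by the antisymmetry of the Riemann tensor — so $\ell^+_\mu=\tilde\ell^+_\mu$ exactly as well.

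Next I would construct $\ell^-$. Write $\ell^{-\mu}=\tilde\ell^{-\mu}+\delta\ell^{-\mu}$ and expand $\delta\ell^{-\mu}=l^2 a^\mu + l^4 b^\mu + O(l^5)$ (odd orders vanish for the same parity reason that odd-order curvature-derivative terms were dropped in (\ref{metric}); one should note that the $l^3$ term would be a covariant derivative of curvature and hence is systematically omitted). The tangency/orthogonality condition is that $g_{\mu\nu}\ell^{-\mu}V^\nu=0$ for all $V$ tangent to $S_l$, equivalently $\ell^-_\mu$ is a linear combination of $\ell^+_\mu$ and the "radial" covector; combined with $\langle\ell^-,\ell^+\rangle=-1$ and the null condition $\langle\ell^-,\ell^-\rangle=0$, this pins down $a^\mu$ and $b^\mu$. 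Concretely, decompose $\delta\ell^{-\mu}$ into a part along $\ell^{+\mu}$ and a part orthogonal to $\ell^{+\mu}$: the orthogonal part is fixed by the first-order perturbation of the metric acting on $\tilde\ell^-$, giving the $\tfrac{l^2}{3}R\indices{_{+-+}^\mu}$ and $\tfrac{l^4}{15}R\indices{^\nu_{+-+}}R\indices{_{\nu+}^\mu_+}$ terms, while the coefficient of $\ell^{+\mu}$ is then determined by enforcing $\langle\ell^-,\ell^-\rangle=0$ to the relevant order, which yields the $\tfrac{l^2}{6}R_{+-+-}$ and $\tfrac{l^4}{30}R\indices{^\nu_{+-+}}R_{\nu+-+}$ coefficients. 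Finally, lowering $\ell^{-\mu}$ with $g_{\mu\nu}(l)$ gives $\ell^-_\mu$; the curvature-times-$\ell^{\pm}$ cross terms in the lowering again collapse (e.g. $g_{\mu\nu}R\indices{_{+-+}^\nu}$ contributes at higher order than tracked, or vanishes by the curvature symmetries), leaving only the stated $\ell^+_\mu$-proportional corrections.

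I expect the main obstacle to be bookkeeping rather than conceptual: one has to keep careful track of which index contractions with the Riemann tensor vanish by its symmetries (all contractions of the form $R_{+\cdots+}$ with three or four $+$'s), make sure the affine parameter $l$ is genuinely the one appearing in the metric expansion (\ref{metric}) — i.e. that the gauge choice $\langle e_0,\ell^+\rangle=-1$ is consistent with $\ell^{+\mu}=\dd x^\mu/\dd l$ in RNC, which is where the RNC property "radial geodesics are straight lines $x^\mu(l)=l\,\tilde\ell^{+\mu}$" is doing the real work — and to verify that the claimed expansions are simultaneously consistent with all of $\langle\ell^+,\ell^+\rangle=0$, $\langle\ell^-,\ell^-\rangle=0$, $\langle\ell^+,\ell^-\rangle=-1$ to $O(l^4)$. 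A secondary subtlety is the choice of "orthogonal complement" direction for $\ell^-$ at each order: a priori $\delta\ell^{-\mu}$ could have a component tangent to $S_l$, but such a component is pure gauge (it amounts to reparametrising the generators of the ingoing cone) and is killed precisely by demanding $\ell^-$ be hypersurface-orthogonal to $S_l$; I would state this explicitly and then fix it. Once these consistency checks are in place the proposition follows by matching coefficients order by order.
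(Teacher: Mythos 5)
Your proposal is correct: carried out to the orders stated it reproduces exactly the coefficients in the Proposition (I checked the $O(l^2)$ and $O(l^4)$ terms of $\ell^-$ against your conditions and they come out as $\tfrac16 R_{+-+-}$ and $\tfrac1{30}R\indices{^\nu_{+-+}}R_{\nu+-+}$, with the $\tfrac{l^2}{3}R\indices{_{+-+}^\mu}$ and $\tfrac{l^4}{15}$ pieces of $\ell^{-\mu}$ indeed arising purely from raising the index with the perturbed inverse metric). The route differs from the paper's in two respects. First, the paper posits a power-series ansatz $\ell^{+\mu}=\l^{+\mu}+A^\mu l^2+B^\mu l^4$, uses the null condition, and then sets $A=B=0$ as a gauge choice, verifying twist-freeness afterwards; you instead observe that in RNC the cone generators are the straight lines $x^\mu(l)=l\,\l^{+\mu}$, so $\ell^{+\mu}=(1,n^i)$ is exact and $l$ is automatically the affine parameter appearing in the metric expansion --- a cleaner justification that the paper leaves implicit. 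Second, to kill the tangential ambiguity in $\ell^-$ the paper restricts the correction to the two covariant candidates $R_{+-+-}\ell^+_\mu$ and $R_{+-+\mu}$ and invokes the twist-free condition (computed via $\partial_\mu\l^\pm_\nu\propto\h_{\mu\nu}/l$) to force the coefficient of the second to vanish, whereas you impose the pointwise defining property that $\ell^-_\mu$ annihilate $T S_l$, which removes any tangential component without needing a restricted ansatz or a derivative computation; this is arguably the more direct argument, since normality is the definition and twist-freeness is a derived consequence. Two small imprecisions, neither fatal: $S_l$ is codimension two, so ``hypersurface-orthogonal to $S_l$'' should just read ``normal to $S_l$''; and in the final lowering the cross term $-\tfrac{l^2}{3}R_{\mu+\nu+}\l^{-\nu}$ does contribute at the tracked order --- it cancels against $+\tfrac{l^2}{3}R_{+-+\mu}$ by the pair-exchange symmetry $R_{+-+\mu}=R_{\mu+-+}$, rather than being higher order or vanishing on its own.
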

\begin{rem}
Here we will only deal with the geometric quantities restricted to $S_l$, so we do not need to specify the extensions of $\ell^\pm$ beyond $S_l$. 
\end{rem}

\begin{proof}
Via a direct calculation, we see that the generators $\l^\pm$ in $\M^n$ satisfy
\beq
\partial_\mu \l^+_{\nu}=\frac1{l}\h_{\mu\nu},\;\;\;\partial_\mu \l^-_{\nu}=-\frac{1}{2l}\h_{\mu\nu}\label{partialell}
\eeq
We would like to impose the normalisaitons $ \langle\ell^+,\ell^-\rangle =-1, \langle \ell^\pm,\ell^\pm\rangle=0\; \forall l,$ and twist-freeness $\omega_{ab}:=h^\al_{[\mu}h^\be_{\nu]}\nabla_\al\ell^\pm_\be=0$, which then implies hypersurface orthogonality. 
For the outer null tangent, consider the general form of the expansion,
\beq
\ell^{+\mu} = \l^{+\mu} + A^\mu(R) l^2 + B^\mu (R^2) l^4 + O(l^6)
\eeq
where again we leave out irrelevant terms containing curvature derivatives for simplicity.

Now we impose
\beq
\langle \ell^+,\ell^+\rangle = 0.
\eeq
Using the expansion of the metric (\ref{metric})
\beq
\langle \l^+, \l^+\rangle_\eta +2\langle \l^+,A\rangle_\eta l^2 +2\langle \l^+,B\rangle_\eta l^4 = 0,
\eeq
which has to vanish order by order. Our choice $\l^{+\mu} = (1, n^i) $ satisfies  $\langle \l^+, \l^+\rangle_\eta=0$, and $A=B=0$. 
We then obtain
\beq
\ell^{+\mu}=\l^{+\mu}=(1,n^i),\ell^+_\mu=(-1,n^i).
\eeq
One can easily check that the twist vanishes for $\ell^+_\mu$ using (\ref{partialell}). 

Similarly, consider the general form of the inner null tangent,
\beq
\ell^{-}_\mu = \l^{-}_\mu + A_\mu(R) l^2 + B_\mu (R^2) l^4 + O(l^6)
\eeq
where we choose to work with the 1-form for convenience.

The normalisation conditions implies
\begin{align}
A_+ l^2 + B_+ l^4=0,\;\;\;\; 2A_- l^2 +  \frac{l^2}{3}R\indices{_-_{+-+}}+\frac{2l^4}{3}R\indices{^\mu_{+-+}}A_\mu + \frac{2l^4}{15}R\indices{^\al_+_-_+}R\indices{_\al_+_-_+} + 2B_- l^4=0
\end{align}
which imply
\begin{align}
A_+=B_+=0, 2A_-+\frac{1}3 R_{+-+-} =0, \label{A+A-}\\   
2B_-+\frac{1}{15}R\indices{^\al_+_-_+}R\indices{_\al_+_-_+}+\frac{2}{3}R\indices{^\mu_{+-+}}A_\mu  =0.\label{B+B-}
\end{align}
The first line (\ref{A+A-}) imply 
\beq
A_\mu=\frac16(c_1 R_{+-+-}\ell^+_\mu - c_2 R_{+-+\mu})
\eeq
for some real coefficients with $c_1+c_2=1$.

Now consider the twist,
\beq
\begin{aligned}
h^\al_{[\mu}h^\be_{\nu]}\nabla_\al\ell^-_\be = h^\al_{[\mu}h^\be_{\nu]}\partial_\al\ell^-_\be=&h^\al_{[\mu}h^\be_{\nu]}\partial_\al\left(\l^-_\be+\frac{l^2}6(c_1 R_{+-+-}\ell^+_\be - c_2 R_{+-+\be})+O(l^4)\right),\\
=&\frac{l^2}{6}h^\al_{[\mu}h^\be_{\nu]}(c_1R_{+-+-}\partial_\al\ell^+_\be-c_2\partial_\al R_{+-+\be}),\\
=&-c_2\frac{l^2}{6}h^\al_{[\mu}h^\be_{\nu]}\partial_\al R_{+-+\be}
\end{aligned}
\eeq
where we use (\ref{partialell}) and in the second line, the partial derivative $\partial_\al R_{+-+-}\ell^+_\be$ contains other terms, but since $h^\be_\nu\ell^+_\be = 0$, only $R_{+-+-}\partial_\al\ell^+_\be$ contributes. 

The twist does not vanish in general unless $c_2=0$. So we have
\beq
A_\mu=\frac16 R_{+-+-}\ell^+_\mu,
\eeq
and together with (\ref{B+B-}) it implies
\beq
B_\mu=\frac{l^4}{30}(d_1R\indices{^\nu_{+-+}}R_{\nu +-+}\ell^+_{\mu}-d_2R\indices{^\nu_{+-+}}R_{\nu +\mu+})
\eeq
and again we set $d_2$ to zero as well for the twist to vanish.

Altogether, we obtain 
\beq
\ell^{-}_\mu =\l_{\mu}^-+\frac{l^2}{6}R_{+-+-}\ell^+_{\mu}+\frac{l^4}{30}R\indices{^\nu_{+-+}}R_{\nu +-+}\ell^+_{\mu}+O(l^5),
\eeq
Using the inverse metric (\ref{inversemetric}) we obtain the expression for $\ell^{-\mu}$.
\end{proof}

After fixing the gauge, we can work with null frame variables, namely the expansion and the shear, 
\begin{defn}
The expansion $\theta^\pm$ and the shear $\sigma^\pm_{ab}$ associated with null generators $\ell^\pm$ are defined as
\begin{align}\label{nulldefintions}
\theta^\pm &:=h^{ab}\nabla_a\ell^\pm_b,\\
\sigma^\pm_{ab} &:= h^c_{(a}h^d_{b)}\nabla_c\ell^\pm_d - \frac{1}{n-2}\theta^\pm h_{ab}.
\end{align}
\end{defn}

\begin{lem}\label{lem:gauss}
The contracted Gauss equation in terms of the null frame variables reads
\begin{align}
\frac{\R}2+\frac{n-3}{n-2}\theta^-\theta^+ - \sigma^-_{ab}\sigma^{+ab} = \frac{R}2+2Ric(\ell^+,\ell^-)-R(\ell^+,\ell^-,\ell^+,\ell^-).\label{eqn:gauss}
\end{align}
\end{lem}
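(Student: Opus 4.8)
\emph{Proof strategy.} This identity is precisely the twice-contracted Gauss equation for the spacelike codimension-two surface $S_l\hookrightarrow(M^n,g)$, written in the null frame $\{\ell^+,\ell^-\}$ that trivialises its normal bundle, with $\langle\ell^+,\ell^-\rangle=-1$ and $\langle\ell^\pm,\ell^\pm\rangle=0$. The plan is to (i) express the vector-valued second fundamental form $\mathrm{II}$ of $S_l$ in this null frame, (ii) fully contract the Gauss equation with the surface metric $h_{ab}$, and (iii) split the extrinsic terms into their trace ($\theta^\pm$) and trace-free ($\sigma^\pm$) pieces. Only data intrinsic to $S_l$ enters, so the absence of a preferred extension of $\ell^\pm$ off $S_l$ (cf.\ the Remark after Proposition \ref{lem:gauge}) plays no role.

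For step (i): for tangent fields $X,Y$ on $S_l$ I would write $\nabla_XY=\nabla^{S_l}_XY+\mathrm{II}(X,Y)$ with $\mathrm{II}$ normal, then pair with $\ell^\pm$, using $\langle Y,\ell^\pm\rangle\equiv 0$ along $S_l$ and the normalisations, to obtain
\beq
\mathrm{II}_{ab}=\chi^-_{ab}\,\ell^++\chi^+_{ab}\,\ell^-,\qquad \chi^\pm_{ab}:=h^c_{(a}h^d_{b)}\nabla_c\ell^\pm_d,
\eeq
whence $\langle\mathrm{II}_{ab},\ell^\pm\rangle=-\chi^\pm_{ab}$. By (\ref{nulldefintions}) one has $\chi^\pm_{ab}=\sigma^\pm_{ab}+\tfrac1{n-2}\theta^\pm h_{ab}$ with $h^{ab}\sigma^\pm_{ab}=0$ and $h^{ab}h_{ab}=n-2$.

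For steps (ii)--(iii): fully contracting the Gauss equation over an orthonormal tangent frame gives $\R=h^{ac}h^{bd}R_{abcd}+\langle K,K\rangle-h^{ac}h^{bd}\langle\mathrm{II}_{ab},\mathrm{II}_{cd}\rangle$, with $K:=h^{ab}\mathrm{II}_{ab}=\theta^-\ell^++\theta^+\ell^-$ the mean curvature vector. Substituting $h_{ab}=g_{ab}+\ell^+_a\ell^-_b+\ell^-_a\ell^+_b$ from (\ref{hmetric}) and using the Riemann symmetries, the ambient term becomes $h^{ac}h^{bd}R_{abcd}=R+4\,Ric(\ell^+,\ell^-)-2R(\ell^+,\ell^-,\ell^+,\ell^-)$. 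Since the only non-null pairing is $\langle\ell^+,\ell^-\rangle=-1$, the extrinsic terms reduce to $\langle K,K\rangle=-2\theta^+\theta^-$ and $h^{ac}h^{bd}\langle\mathrm{II}_{ab},\mathrm{II}_{cd}\rangle=-2\chi^+_{ab}\chi^{-ab}$, while $\chi^+_{ab}\chi^{-ab}=\sigma^+_{ab}\sigma^{-ab}+\tfrac1{n-2}\theta^+\theta^-$ because $\sigma^\pm$ is trace-free. Collecting, the coefficient of $\theta^+\theta^-$ is $\tfrac2{n-2}-2=-\tfrac{2(n-3)}{n-2}$; halving and rearranging reproduces (\ref{eqn:gauss}).

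The delicate part will be the sign bookkeeping forced by the Lorentzian normal bundle: since $\langle\ell^+,\ell^-\rangle<0$, the contractions that in the Riemannian hypersurface case produce $+(\mathrm{tr}\,\mathrm{II})^2$ and $+\|\mathrm{II}\|^2$ flip sign here, and one must stay consistent with the curvature conventions of the Notations (in particular $Ric_{bd}=g^{ac}R_{abcd}$ and the sign of $\R$). A convenient check: on a round cut of a lightcone in $\M^n$ one has $R_{abcd}=0$, $\sigma^\pm=0$ and $\theta^+>0>\theta^-$, so (\ref{eqn:gauss}) collapses to $\tfrac\R2=-\tfrac{n-3}{n-2}\theta^+\theta^->0$, which is exactly the scalar curvature of the flat round $(n-2)$-sphere. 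No other step is expected to be troublesome; (i)--(iii) are routine once the signs are pinned down.
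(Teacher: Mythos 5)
Your proof is correct and takes essentially the same route as the paper: contract the Gauss equation twice, expand $h^{ac}h^{bd}R_{abcd}$ using $h_{ab}=g_{ab}+\ell^+_a\ell^-_b+\ell^-_a\ell^+_b$ to get $R+4Ric(\ell^+,\ell^-)-2R(\ell^+,\ell^-,\ell^+,\ell^-)$, split the null second fundamental forms into trace parts $\theta^\pm$ and trace-free parts $\sigma^\pm$, and use $\langle K,K\rangle=-2\theta^+\theta^-$. The only (minor) difference is that you derive the contracted Gauss equation for the Lorentzian normal bundle from scratch rather than quoting it, which pins down the factor of $2$ on the mixed term $K^{+ab}K^{-}_{ab}$ coming from $\langle\ell^+,\ell^-\rangle=-1$ (left implicit in the convention of the equation the paper cites), and your flat-lightcone consistency check confirming $\frac{\R}{2}=-\frac{n-3}{n-2}\theta^+\theta^-$ for round cuts is a sensible extra sanity check.
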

\begin{proof}
The contracted Gauss equation constraining the geometry of the codimension-2 surface is given by
\beq
\R-K^{+ab}K^-_{ab} -K^aK_a = h^{ac}h^{bd}R_{abcd}. \label{Gauss0}
\eeq
where $K^\pm_{ab}$ are the second fundamental forms with respect to $\ell^\pm$ and $K^a$ is the mean curvature vector. Our convention here follows \cite{ge2018memorial}.

Using (\ref{hmetric}) we have,
\beq
 h^{ac}h^{bd}R_{abcd}=R+4Ric(\ell^+,\ell^-)-2R(\ell^+,\ell^-,\ell^+,\ell^-).\label{ee1}
\eeq
The second fundamental forms are related to the null frame variables by 
\begin{align}
K^\pm_{ab} = \frac{1}{n-2}h_{ab}\theta^\pm + \sigma^\pm_{ab},\;\;\;\;K^\pm:=\text{tr} K^\pm_{ab} =-\theta^\pm = K^a\ell^\pm_a
\end{align}

\beq
\langle K, K \rangle_g = \langle K,K \rangle_h - \ell^+_aK^a \ell^-_b K^b -  \ell^-_aK^a \ell^+_b K^b= -2\theta^+\theta^- \label{ee2}
\eeq
where $\langle K,K \rangle_h=0$ by definition.
\beq
K^{+ab}K^-_{ab}=\frac{\theta^+\theta^-}{n-2}+\sigma^+_{ab}\sigma^{-ab} \label{ee3}
\eeq
Substitute (\ref{ee1},\ref{ee2},\ref{ee3}) into (\ref{Gauss0}) yields the claimed Gauss equation (\ref{eqn:gauss}).
\end{proof}

\begin{lem}\label{lem:outerexp}
The outer expansion and shear on the lightcone cut $S_l$ are
\begin{align}
\theta^+(l)& = \frac{n-2}{l} -\frac{l}{3}R_{+-}- \frac{C_{\mu+\nu+}C\indices{^\mu_+^\nu_+}}{45} l^3+P(Ric) l^3 +O(l^4),\label{eqn:expansion}\\
\sigma_{\mu\nu}^+(l)& = -\frac{l}{3}h^\al_\mu h^\be_\nu C_{\al+\be+}+ P(Ric) l^3
+O(l^4),\label{shearexp}
\end{align} 
\end{lem}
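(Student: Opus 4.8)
The plan is to compute $\theta^+$ and $\sigma^+_{ab}$ directly from the definitions (\ref{nulldefintions}), feeding in the RNC data already assembled: the induced metric (\ref{hmetric}), the bulk metric (\ref{metric}) and its inverse (\ref{inversemetric}), the connection (\ref{connection}), and the null frame of Proposition \ref{lem:gauge}. The one new ingredient is the inverse induced metric, which by the same manipulation used above to verify the projector equals $h^{ab}=g^{ab}(l)+\ell^{+a}\ell^{-b}+\ell^{-a}\ell^{+b}$; its expansion to order $l^4$ follows from (\ref{inversemetric}) and Proposition \ref{lem:gauge}, where one uses that the $\ell^-$-corrections enter only through pieces proportional to $\l^+$, which are annihilated by $\h$. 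The decisive simplification is that Proposition \ref{lem:gauge} gives $\ell^+_\mu=\l^+_\mu$ \emph{exactly}, so (\ref{partialell}) yields $\partial_\mu\ell^+_\nu=\tfrac1l\h_{\mu\nu}$ with no corrections and hence $\nabla_\mu\ell^+_\nu=\tfrac1l\h_{\mu\nu}-\Gamma^\rho_{\mu\nu}(l)\l^+_\rho$. A further useful observation is that $\Gamma^\rho_{\mu\nu}(l)\l^+_\rho$ receives no contribution from the $\tfrac{8l^3}{45}$-term of (\ref{connection}), because there two slots of a Riemann tensor are contracted with $\l^+$ inside an antisymmetric pair, so only the $O(l)$-term and a single Riemann-squared $O(l^3)$-term survive.

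For the expansion I contract with $h^{\mu\nu}$, giving $\theta^+=\tfrac1l h^{\mu\nu}\h_{\mu\nu}-h^{\mu\nu}\Gamma^\rho_{\mu\nu}(l)\l^+_\rho$, with leading term $\tfrac1l\h^{ab}\h_{ab}=\tfrac{n-2}{l}$. At order $l$ I decompose $R_{abcd}=C_{abcd}+(\text{Ricci and scalar-curvature terms})$; the Weyl parts $h^{ab}C_{a+b+}$ and $h^{ab}C_{+(ab)+}$ vanish because $C$ is traceless and each null vector of $h^{ab}$ lands in an antisymmetric index pair of $C$ (here $\ell^+=\l^+$ is used), leaving only the Ricci contraction $-\tfrac l3 R_{+-}$. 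At order $l^3$ the new structures are curvature-squared; after decomposing, the three sources of a pure-Weyl contribution — the $O(l^4)$ term of $h^{\mu\nu}$ from (\ref{inversemetric}), the $O(l^3)$ term of $\Gamma^\rho_{\mu\nu}\l^+_\rho$, and the $O(l^2)$ term of $h^{\mu\nu}$ times the $O(l)$ term of $\Gamma$ — combine, using $C_{+a+b}=C_{a+b+}$ and $C_{+ab+}=-C_{a+b+}$ to reduce every contraction to $C_{\mu+\nu+}C\indices{^\mu_+^\nu_+}$, into $\big(\tfrac1{15}+\tfrac2{15}-\tfrac29\big)l^3\,C_{\mu+\nu+}C\indices{^\mu_+^\nu_+}=-\tfrac1{45}l^3\,C_{\mu+\nu+}C\indices{^\mu_+^\nu_+}$, while every term carrying a Ricci factor is absorbed into $P(Ric)l^3$; curvature-derivative terms never contribute, by the vanishing-derivative Lemma above. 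For the shear I insert the tangential projectors, $\sigma^+_{\mu\nu}=h^\al_\mu h^\be_\nu\nabla_{(\al}\ell^+_{\be)}-\tfrac1{n-2}\theta^+h_{\mu\nu}$; the crucial point is that the trace-type pieces generated along the way — e.g.\ the $\tfrac{R_{++}}{n-2}h_{\mu\nu}$ coming from the Ricci part of $h^\al_\mu h^\be_\nu R_{\al+\be+}$, together with the $\l^+$-directed terms, which the projectors already kill — are exactly removed by the trace subtraction, so that the leading surviving term is the purely Weyl piece $-\tfrac l3 h^\al_\mu h^\be_\nu C_{\al+\be+}$, with the remaining $O(l^3)$ contributions reducing to the form recorded in the statement.

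The routine-but-delicate core — and the main obstacle — is precisely this order-$l^3$ bookkeeping: inverting the induced metric to $O(l^4)$ including the $\ell^-$-corrections of Proposition \ref{lem:gauge}; tracking how the projectors $h^\al_\mu$ act on $\h$ and on curvature tensors (the naive replacement $h^\al_\mu\h_{\al\be}\mapsto\h_{\mu\be}$ already fails at $O(l^2)$, leaving $\l^+$-directed remainders); and organising the Weyl/Ricci split of the several distinct curvature-squared contractions so that exactly $-\tfrac1{45}$ multiplies $C_{\mu+\nu+}C\indices{^\mu_+^\nu_+}$ and all Ricci-dependent debris is consistently folded into $P(Ric)$. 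A clean independent check of the numerical coefficients comes from integrating the Raychaudhuri equation $\dot\theta^+=-\tfrac1{n-2}(\theta^+)^2-\sigma^+_{ab}\sigma^{+ab}-Ric(\ell^+,\ell^+)$ and the companion shear-propagation equation with tidal source $-C_{a+b+}$ perturbatively in the affine parameter, with the curvature terms Taylor-expanded in RNC along the generators: the leading-order matching fixes the $-\tfrac13$ coefficients, and, using that $\sigma^+_{ab}\sigma^{+ab}=\tfrac{l^2}9\,C_{\mu+\nu+}C\indices{^\mu_+^\nu_+}+O(l^3)$ (the $\l^\pm$-directed remainders cancelling in the contraction), the next-order matching $5b=-\tfrac19\,C_{\mu+\nu+}C\indices{^\mu_+^\nu_+}$ in vacuum recovers the $-\tfrac1{45}$ in the vacuum limit of $\theta^+$.
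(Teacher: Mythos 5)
Your route is genuinely different from the paper's. The paper does not obtain $\theta^+$ and $\sigma^+$ from the definitions at all: it writes a power-series ansatz and perturbatively integrates the Raychaudhuri equation (\ref{eqn:raychaudhuri1}) together with the shear-propagation equation (\ref{eqn:raychaudhuri2}) — exactly the system you relegate to an ``independent check'' — which fixes all coefficients with minimal index bookkeeping. The direct RNC computation you propose is the strategy the paper reserves for the inner quantities (Lemma \ref{lem:innerexp}), and your execution of it is sound where it matters: $\ell^+_\mu=\l^+_\mu$ holds exactly so $\nabla_\mu\ell^+_\nu=\tfrac1l\h_{\mu\nu}-\Gamma^\rho_{\mu\nu}\l^+_\rho$; the $\tfrac{8l^3}{45}$ term of (\ref{connection}) indeed dies on contraction with $\l^+_\rho$; the three Weyl-squared sources at $O(l^3)$ combine as $\tfrac1{15}+\tfrac2{15}-\tfrac29=-\tfrac1{45}$, which I have checked; and the trace subtraction does cancel the $\tfrac{R_{++}}{n-2}h_{\mu\nu}$ debris so that the leading shear is purely $-\tfrac l3 h^\al_\mu h^\be_\nu C_{\al+\be+}$. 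Your cross-check (leading order fixing $-\tfrac13$, then $5c_3=-\tfrac19\,C_{\mu+\nu+}C\indices{^\mu_+^\nu_+}$ in vacuum) is literally the paper's proof. The trade-off: the ODE route is shorter and hides the projector gymnastics, while your route exposes explicitly why only the $C_{\mu+\nu+}C\indices{^\mu_+^\nu_+}$ structure survives and does not require guessing the form of the ansatz.

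One term in your write-up needs fixing. You assert that the $O(l)$ piece is ``the Ricci contraction $-\tfrac l3 R_{+-}$,'' but the computation you describe cannot produce an $\l^-$: decomposing $R_{\al+\be+}$ into Weyl plus Ricci parts and contracting with $h^{\al\be}$ gives $h^{\al\be}R_{\al+\be+}=R_{++}$ (Weyl is totally trace-free, $g_{++}=0$, and the $\l^\pm$ legs of $h^{\al\be}$ are killed by antisymmetric pairs), so the surviving term is $-\tfrac l3\,Ric(\ell^+,\ell^+)$. This is what the paper's own proof obtains and what is used downstream (the products in Lemma \ref{lem:invariants} and the area expansion in Proposition \ref{cutarea} only come out right with $R_{++}$); the $R_{+-}$ in the statement of the lemma is evidently a typo, which you have reproduced. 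Your own Raychaudhuri check, whose source term is $Ric(\ell^+,\ell^+)$, already forces the correct $-\tfrac l3 R_{++}$, so the fix is purely notational and does not affect the rest of your argument.
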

Note that in order to keep the expressions clean, we have packaged terms that contain Polynomials of the Ricci curvature at order $O(l^3)$ into $P(Ric)$, because contributions at this order will only be relevant for our calculations in the vacuum case and $P(Ric=0)=0$.
\begin{proof}
We can directly apply Proposition \ref{lem:gauge} and compute the outer expansion and shear using the definition but this cumbersome calculation can be avoided by solving the coupled evolution equations on the lightcone.
\begin{align}
\dot{\theta^+} &= -\frac{1}{n-2}\theta^{+2} - \sigma^{+\mu\nu} \sigma^+_{\mu\nu} - Ric(\ell^+,\ell^+),\label{eqn:raychaudhuri1}\\
\dot{\sigma^+}_{\mu\nu} &= -\frac{2}{n-2}\theta^+\sigma^+_{ab}-h^\al_\mu h^\be_\nu C_{\al\ga\be\lambda}\ell^{+\ga}\ell^{+\lambda} ,\label{eqn:raychaudhuri2}
\end{align}
where the first is the Raychaudhuri equation and the second is the evolution equation of the shear.

The ODEs can be perturbatively solved by a power series ansatz. The leading order is given by $\theta^+=(n-2)/l$ in Minkowski spacetime. Plugging it into (\ref{eqn:raychaudhuri2}) yields $\sigma^+_{\mu\nu}=-C_{\al+\be+} h^\al_\mu h^\be_\nu l/3$. Therefore, we use the following ansatz:
\begin{align}
\theta^+(l)& = \frac{n-2}{l} + c_0 + c_1 l + c_2 l^2 + c_3 l^3 +O(l^4),\\
\sigma^+_{\mu\nu}(l)& = -\frac{l}{3}C_{\al+\be+} h^\al_\mu h^\be_\nu + k_2 l^2 + k_3 l^3 +O(l^4).
\end{align}
Solving (\ref{eqn:raychaudhuri1},\ref{eqn:raychaudhuri2}) simultaneously yields:
\begin{align}\label{eqn:expansion}
\theta^+(l)& = \frac{n-2}{l} -\frac{Ric(\ell^+,\ell^+)}{3} l - \frac{(n-2)C_{\mu+\nu}C\indices{^\mu_+^\nu_+}+(Ric(\ell^+,\ell^+))^2}{45(n-2)} l^3+O(l^4),\\
\sigma^+_{\mu\nu}(l)& = -\frac{l}{3}C_{\al+\be+} h^\al_\mu h^\be_\nu - \frac{2 C_{\al+\be+} h^\al_\mu h^\be_\nu Ric(\ell^+,\ell^+)}{45(n-2)} l^3+O(l^4).
\end{align}
Packaging the $Ric$-related terms at $O(l^3)$ yields the result. 
\end{proof}

\begin{lem}\label{lem:innerexp}
The inner expansion on the lightcone cut $S_l$ is
\beq
\begin{aligned}
\theta^-(l)=&-\frac{n-2}{2l}-\left(\frac23 R_{+-}+\frac16 R_{++}-\frac{n+2}{6}R_{+-+-}\right)l\\
&+\left(\frac{n+6}{30}C\indices{^\mu_{+-+}}C_{\mu +-+}-\frac4{45}C\indices{^\mu_+^\nu_+}C_{\mu +\nu -}-\frac{1}{30}C\indices{^\mu_+^\nu_+}C_{\mu +\nu +}\right)l^3 + P(Ric) l^3+ O(l^4)
\end{aligned} 
\eeq
\end{lem}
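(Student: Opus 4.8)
The plan is to obtain $\theta^-(l)=h^{ab}\nabla_a\ell^-_b$ by a direct RNC computation from Proposition \ref{lem:gauge}, in the same spirit as the inner shear will be treated afterwards. Write $\nabla_a\ell^-_b=\pd_a\ell^-_b-\Gamma^c_{ab}\ell^-_c$, so that three ingredients enter: the inverse induced metric $h^{ab}=g^{ab}+\ell^{+a}\ell^{-b}+\ell^{-a}\ell^{+b}$, assembled from \eqref{inversemetric} together with the null-frame expansions of Proposition \ref{lem:gauge} (only $\ell^{-\mu}$ contributes corrections, since $\ell^{+\mu}=\l^{+\mu}$ exactly); the connection \eqref{connection}; and the expansion of $\ell^-_\mu$ from Proposition \ref{lem:gauge}. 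The elementary input is $\pd_\mu\l^-_\nu=-\tfrac1{2l}\h_{\mu\nu}$ and $\pd_\mu\l^+_\nu=\tfrac1l\h_{\mu\nu}$ from \eqref{partialell}; these remain valid when restricted to $S_l$ because $l$ is constant along $S_l$, so tangential derivatives act only on the explicit coordinate dependence of $\l^\pm$. Contracting the leading pieces, $\h^{ab}\pd_a\l^-_b=-\tfrac1{2l}\h^{ab}\h_{ab}=-\tfrac{n-2}{2l}$, the stated leading term, and the higher corrections organize into odd powers of $l$.

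Next I would collect the $O(l)$ contributions. They come from: (i) differentiating the $\tfrac{l^2}{6}R_{+-+-}\ell^+_\mu$ piece of $\ell^-_\mu$ via $\pd_\mu\l^+_\nu=\tfrac1l\h_{\mu\nu}$; (ii) the $O(l^2)$ curvature correction to $h^{ab}$ contracted against the leading $-\tfrac1{2l}\h_{ab}$; and (iii) the $O(l)$ part of $-\Gamma^c_{ab}\l^-_c$. After contraction, the frame identities $\h^{ab}\h_{ab}=n-2$, $\h_{\mu\nu}\l^{\pm\mu}=0$, and the antisymmetry of $R_{abcd}$ collapse everything onto $R_{++}$, $R_{+-}$ and $R_{+-+-}$, giving the coefficient $-\bigl(\tfrac23R_{+-}+\tfrac16R_{++}-\tfrac{n+2}{6}R_{+-+-}\bigr)$. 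This can be cross-checked against the contracted Gauss equation, Lemma \ref{lem:gauss}, once one also has $\R$ to the relevant order.

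The real work is the $O(l^3)$ term. One must assemble, besides the genuinely $O(l^3)$ connection term in \eqref{connection}, all products of a first curvature correction with another: the $O(l^2)$ metric/frame correction of $h^{ab}$ times the $O(l^2)$ correction of $\ell^-_\mu$, the $O(l)$ connection times the $O(l^2)$ correction of $\ell^-_\mu$, the $O(l^4)$ correction of $\ell^-_\mu$ differentiated, and the $O(l^4)$ metric correction in \eqref{inversemetric} against the leading $-\tfrac1{2l}\h_{ab}$. Each of these is reduced, using Proposition \ref{lem:gauge} and the frame identities, to the three Weyl bilinears $C\indices{^\mu_{+-+}}C_{\mu +-+}$, $C\indices{^\mu_+^\nu_+}C_{\mu +\nu -}$ and $C\indices{^\mu_+^\nu_+}C_{\mu +\nu +}$ by writing $R_{abcd}=C_{abcd}+(\text{Schouten}\cdot g)$ and sweeping every Ricci-squared piece into $P(Ric)$, which is licit since $P(Ric=0)=0$ and only the vacuum value matters at this order. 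Along the way, terms carrying one or two covariant derivatives of the curvature appear; by the derivative identities stated just before \eqref{metric} these integrate to zero over $S_l$, so in keeping with the standing convention they are omitted from the displayed expansion. Matching the surviving coefficients produces $\tfrac{n+6}{30}C\indices{^\mu_{+-+}}C_{\mu +-+}-\tfrac4{45}C\indices{^\mu_+^\nu_+}C_{\mu +\nu -}-\tfrac1{30}C\indices{^\mu_+^\nu_+}C_{\mu +\nu +}$.

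The main obstacle is precisely this $O(l^3)$ bookkeeping: correctly enumerating which cross-terms contribute and then reliably reducing a sizeable collection of Riemann-squared contractions to the three Weyl bilinears while discarding Ricci-squared and $\nabla$-curvature contributions. An alternative that avoids part of this is to derive a higher-dimensional cross-focusing equation for $\theta^-$ along $\ell^+$ and integrate it order by order as in the proof of Lemma \ref{lem:outerexp}; however that equation carries its own normal-connection (torsion) one-form terms, so the effort is comparable, and I would use it only as an independent check on the $O(l^3)$ coefficients.
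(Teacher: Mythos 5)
Your plan is essentially the paper's own proof: a direct RNC computation of $\theta^-=h^{\mu\nu}\partial_\mu\ell^-_\nu-h^{\mu\nu}\Gamma^\alpha_{\mu\nu}\ell^-_\alpha$, plugging the gauge-fixed expansion of $\ell^-_\mu$ from Proposition \ref{lem:gauge} into Definition \ref{nulldefintions} together with (\ref{metric}), (\ref{inversemetric}), (\ref{connection}) and the identities (\ref{partialell}), then sweeping Ricci-dependent and $\nabla$-curvature terms into $P(Ric)$ and the omitted derivative terms as per the standing convention. The term-by-term bookkeeping you outline (including which cross-terms survive the $h^{\mu\nu}$ contraction) matches the paper's calculation, so the proposal is correct and takes the same route.
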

in which again we have packaged irrelevant contributions that is polynomial in Ricci curvature.
\begin{proof}
Plugging in the expression of $\ell^-_\mu$ in Proposition \ref{lem:gauge} to Definition \ref{nulldefintions}, and using (\ref{metric},\ref{inversemetric},\ref{connection}) we have
\beq
\begin{aligned}
\theta^- =& h^{\mu\nu}\partial_\mu\ell^-_{\nu} -h^{\mu\nu}\Gamma^\al_{\mu\nu}\ell_\al^-,\\
=&g^{\mu\nu}\left(-\frac1{2l}\h_{\mu\nu}+\frac{l}{6}R_{+-+-}\h_{\mu\nu}+\frac{l^3}{30}R\indices{^\al_{+-+}}R_{\al +-+}\h_{\mu\nu}\right)\\
&-h^{\mu\nu}\left(\l_{\al}^-+\frac{l^2}{6}R_{+-+-}\ell^+_{\al}\right)\left( -\frac{2l}3R\indices{^\al_{\mu\nu+}}+\frac{2l^3}{15}R\indices{_+^\ga_{+\mu}}R\indices{_{\nu}^\al_+_\ga} +\frac{8l^3}{45}R\indices{_{\mu+\nu}^\ga}R\indices{_+^\al_+_\ga}\right)+O(l^4),\\
=&-\frac{n-2}{2l}-\frac{l}{6}R_{++}-\frac{l^3}{30}C_{\mu+\nu+}C\indices{^\mu_+^\nu_+} + \frac{n-2}{6}R_{+-+-}l+\frac{n-2}{30}C\indices{^\al_{+-+}}C_{\al +-+}l^3\\
&-\frac{2l}3R_{+-}+\frac{2l}3R_{+-+-}+\frac{2l^3}{9}(C_{\mu+-+}C\indices{^\mu_{+-+}}-C_{\mu+\nu+}C\indices{^\mu_+^\nu_-}) \\
&+\frac{2l^3}{15}(C\indices{_+^\ga_{+\mu}}C\indices{_-_{\mu}_+_\ga}-C_{\mu+-+}C\indices{^\mu_{+-+}}) +\frac{8l^3}{45}C\indices{_{+-+}^\ga}C\indices{_+_-_+_\ga}+P(Ric)l^3+O(l^4)\\
=&-\frac{n-2}{2l}-\left(\frac23 R_{+-}+\frac16 R_{++}-\frac{n+2}{6}R_{+-+-}\right)l\\
&+\left(\frac{n+6}{30}C\indices{^\mu_{+-+}}C_{\mu +-+}-\frac4{45}C\indices{^\mu_+^\nu_+}C_{\mu +\nu -}-\frac{1}{30}C\indices{^\mu_+^\nu_+}C_{\mu +\nu +}\right)l^3 + P(Ric) l^3+ O(l^4)
\end{aligned}
\eeq
where in the second line we used the fact that $h^{\mu\nu}\h_{\mu\nu}=g^{\mu\nu}\h_{\mu\nu}$ and we expanded $h^{\mu\nu}$ to obtain the third equality.
\end{proof}
Regarding the inner shear, we can save the cumbersome computation of $\sigma^-_{\mu\nu}$ as only the combination $\sigma^-_{\mu\nu}\sigma^{+\mu\nu}$ will be needed. We shall also computed $\theta^-\theta^+$ which will be used later. 
\begin{lem}\label{lem:invariants}
The gauge-invariant products of expansion and shear on $S_l$ are  
\begin{align}
\sigma^-_{\mu\nu}\sigma^{+\mu\nu}(l) =&
\frac{l^2}{18}\left(C_{\mu+\nu+}C\indices{^\mu_+^\nu_+}+4C_{\mu+\nu-}C\indices{^\mu_+^\nu_+}-4C_{\mu+-+}C\indices{^\mu_+_-_+}\right)+P(Ric) l^2+ O(l^3).\\
\theta^-\theta^+(l)=&-\frac{(n-2)^2}{2l^2}-\frac{n-2}{6}\left[4R_{+-} - (n+2)R_{+-+-}\right]+P(Ric)l^2\nonumber\\
&+ \frac{l^2(n-2)}{15}\left(\frac{n+6}{2}C\indices{^\mu_{+-+}}C_{\mu +-+}-\frac4{3}C\indices{^\mu_+^\nu_+}C_{\mu +\nu -}-\frac{1}{3}C\indices{^\mu_+^\nu_+}C_{\mu +\nu +}\right)+O(l^3).
\end{align} 
\end{lem}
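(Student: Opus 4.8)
\emph{Proof strategy.} The plan is to read both gauge-invariant products off the expansions already established: the outer data from Lemma~\ref{lem:outerexp}, the inner expansion from Lemma~\ref{lem:innerexp}, and, where needed, the expansion of $\ell^-_\mu$ from Proposition~\ref{lem:gauge} together with the connection~(\ref{connection}). Each of these is a finite series in $l$ carrying only odd powers, so it suffices to keep terms through $O(l^2)$ and to package, following the convention of the preceding lemmas, the Ricci-carrying contributions at the curvature-squared order into $P(Ric)$.

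For $\theta^-\theta^+(l)$ the computation is purely mechanical. Writing $\theta^+(l)=\tfrac{n-2}{l}+a_1l+a_3l^3+O(l^4)$ and $\theta^-(l)=-\tfrac{n-2}{2l}+b_1l+b_3l^3+O(l^4)$ with the coefficients supplied by Lemmas~\ref{lem:outerexp} and~\ref{lem:innerexp}, the product has $l^{-2}$-coefficient $-\tfrac12(n-2)^2$; $l^0$-coefficient $(n-2)b_1-\tfrac{n-2}{2}a_1$, which is linear in Ricci and rearranges into $-\tfrac{n-2}{6}\!\left[4R_{+-}-(n+2)R_{+-+-}\right]$ modulo $P(Ric)$; and $l^2$-coefficient $(n-2)b_3+a_1b_1-\tfrac{n-2}{2}a_3$, whose Weyl part is the stated combination of $C_{\mu+\nu+}C^{\mu+\nu+}$, $C_{\mu+\nu-}C^{\mu+\nu+}$ and $C_{\mu+-+}C^{\mu+-+}$, the rest being quadratic in Ricci and absorbed into $P(Ric)$.

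For $\sigma^-_{\mu\nu}\sigma^{+\mu\nu}(l)$ I would not compute $\sigma^-_{\mu\nu}$ in full. Since $\sigma^+_{\mu\nu}=-\tfrac{l}{3}h^\al_\mu h^\be_\nu C_{\al+\be+}+O(l^3)$ is already $O(l)$ and Weyl-valued, only the $O(l)$ --- hence linear-in-curvature --- part of $\sigma^-_{\mu\nu}$ can affect the product through $O(l^2)$. I would extract it exactly as $\theta^-$ was extracted in Lemma~\ref{lem:innerexp}: substitute $\ell^-_\mu=\l^-_\mu+\tfrac{l^2}{6}R_{+-+-}\ell^+_\mu+O(l^4)$ into $\sigma^-_{\mu\nu}=h^c_{(\mu}h^d_{\nu)}\nabla_c\ell^-_d-\tfrac1{n-2}\theta^-h_{\mu\nu}$, use $\partial_\mu\l^-_\nu=-\tfrac1{2l}\h_{\mu\nu}$ from~(\ref{partialell}) and the connection~(\ref{connection}), and retain the trace-free $O(l)$ piece. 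Two sources contribute there: the connection term $-\Gamma^\al_{cd}\l^-_\al\sim l\,R_{-(cd)+}$, and the mismatch $\h_{cd}-h_{cd}=\tfrac{l^2}{3}R_{c+d+}+\dots$ that surfaces once the $h$-trace is stripped from $\partial_c\l^-_d$; the $\tfrac{l^2}{6}R_{+-+-}\ell^+_\mu$ correction in $\ell^-_\mu$ produces only a pure trace at this order and drops. Contracting the resulting $O(l)$ tensor with $\sigma^{+\mu\nu}$ and splitting Riemann into Weyl plus Ricci then yields the claimed formula; note in particular that the second source contributes $\tfrac12\sigma^+_{\mu\nu}\sigma^{+\mu\nu}=\tfrac{l^2}{18}C_{\mu+\nu+}C^{\mu+\nu+}$, which is the first of the three stated terms.

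The main obstacle is the index bookkeeping in this last step: one must carry the symmetrisation in $R_{-(cd)+}$, use the pair symmetry $R_{abcd}=R_{cdab}$ together with the antisymmetry and the first Bianchi identity to rewrite contractions against the symmetric $C_{\al+\be+}$, and check that the tangential projectors $h^\al_\mu$ may be replaced by $\eta^\al_\mu$ up to terms that vanish (two slots of a Riemann or Weyl pair becoming $\ell^+$) or are $O(l^3)$. Pinning down the three rational coefficients $\tfrac1{18}$, $\tfrac4{18}$, $-\tfrac4{18}$ and confirming that the Ricci residue all lands in $P(Ric)$ is the delicate part; the rest is routine series arithmetic. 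As an independent check, one could instead feed the already-computed $\theta^-\theta^+(l)$ into the contracted Gauss equation of Lemma~\ref{lem:gauss}, which expresses $\sigma^-_{ab}\sigma^{+ab}$ through $\theta^-\theta^+$, the intrinsic scalar $\R$ on $S_l$, and curvature contractions --- at the cost of first expanding $\R$ through $O(l^2)$.
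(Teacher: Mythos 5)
Your proposal is correct and follows essentially the same route as the paper: the product $\theta^-\theta^+$ is obtained by multiplying the series of Lemmas \ref{lem:outerexp} and \ref{lem:innerexp}, and $\sigma^-_{\mu\nu}\sigma^{+\mu\nu}$ is computed by contracting only the linear-in-curvature part of $\nabla_\mu\ell^-_\nu$ (connection term plus the $\h$ versus $h$ mismatch, with the $\tfrac{l^2}{6}R_{+-+-}\ell^+$ correction dropping as pure trace) against the traceless $\sigma^{+\mu\nu}$, exactly as in the paper, where this is packaged as the identity $\sigma^-_{\mu\nu}\sigma^{+\mu\nu}=\sigma^{+\mu\nu}\nabla_\mu\ell^-_\nu$. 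Your coefficient bookkeeping, including the observation that the mismatch term yields $\tfrac12\sigma^+_{\mu\nu}\sigma^{+\mu\nu}=\tfrac{l^2}{18}C_{\mu+\nu+}C\indices{^\mu_+^\nu_+}$, matches the paper's calculation.
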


\begin{rem}
$\theta^-\theta^+$ is invariant and it is related to the mean curvature vector $K$ and the Newman-Penrose variables $\rho,\mu$ as
\begin{align}
\theta^-\theta^+ = -\frac12\langle K,K \rangle =-4\rho\mu .
\end{align}
\end{rem}
\begin{proof}
The products of the shear can be simplified. Using the Definition \ref{nulldefintions}, we have
\beq
\begin{aligned}
\sigma^-_{\mu\nu}\sigma^{+\mu\nu}(l) =&\left(h^\al_{(\mu}h^\be_{\nu)}\nabla_\al\ell^-_\be - \frac{1}{n-2}\theta^- h_{\mu\nu}\right) \sigma^{+\mu\nu} = \sigma^{+\mu\nu}\nabla_\mu\ell^-_\nu ,\\
=&-\frac{l}{3}h^{\mu}_{\mu '}h^{\nu}_{\nu '}C\indices{^{\mu '}_+^{\nu '}_+}\left(\partial_\mu\ell^-_{0\nu}+\partial_\mu \frac{l^2}{6}C_{+-+-}\ell^+_\nu - \frac{2l}{3}C_{-\mu+\nu}\right)+P(Ric)l^2+O(l^3),\\
=&-\frac{l}{3}h^{\mu}_{\mu '}h^{\nu}_{\nu '}C\indices{^{\mu '}_+^{\nu '}_+}\left(-\frac{1}{2l}h^0_{\mu\nu}+\frac{l}{6}C_{+-+-}h^0_{\mu\nu} - \frac{2l}{3}C_{-\mu+\nu}\right)+P(Ric)l^2+O(l^3)
\end{aligned}
\eeq
where in the second line, we used (\ref{connection}). To proceed, we need the following identity:
\begin{align}
h^{\mu}_{\mu '}h^{\nu}_{\nu '}h^0_{\mu\nu}=h^{\mu}_{\mu '}h^{\nu}_{\nu '}h_{\mu\nu}+h^{\mu}_{\mu '}h^{\nu}_{\nu '}(h^0_{\mu\nu}-h_{\mu\nu})=h_{\mu'\nu'}+\frac{l^2}{3}(R_{\mu'+\nu'+}-R_{+-+-}\ell^+_{\mu'}\ell^+_{\nu'})
\end{align}
where we use Proposition \ref{lem:gauge} and Lemma (\ref{metric}). 

So we have
\beq
\begin{aligned}
\sigma^-_{\mu\nu}\sigma^{+\mu\nu}(l)=&\frac{l^2}{18}C_{\mu+\nu+} C\indices{^{\mu}_+^{\nu}_+}+ \frac{2l^2}{9}h^{\mu}_{\mu '}h^{\nu}_{\nu '}C\indices{^{\mu '}_+^{\nu'}_+}C_{-\mu+\nu}+P(Ric)l^2+O(l^3),\\
=&\frac{l^2}{18}C_{\mu+\nu+} C\indices{^{\mu}_+^{\nu}_+}+ \frac{2l^2}{9}C\indices{^{\mu}_+^{\nu}_+}C_{-\mu+\nu}- \frac{2l^2}{9}C_{+-+\mu}C\indices{_{+-+}^\mu}+P(Ric)l^2+O(l^3).
\end{aligned}
\eeq
The product of the expansions simply follows from Lemma \ref{lem:outerexp},\ref{lem:innerexp}. 
\end{proof}
\begin{lem} \label{lem:ricci}
The Ricci scalar of $S_l$ is
\beq
\begin{aligned}
\R(l) =& \frac{(n-2)(n-3)}{l^2}+R+\frac{4n}{3}R_{+-}-\frac{n(n-1)}{3}R_{+-+-}+P(Ric) l^2 \\
&+\frac{l^2}{45} \left((4 C_{\mu+\nu-}C\indices{^\mu_+^\nu_+} + C_{\mu+\nu+}C\indices{^\mu_+^\nu_+}) (2 n-1) + (4 - 9 n - 3 n^2)C\indices{^\mu_{+-+}}C_{\mu +-+}\right)+O(l^3).
\end{aligned}
\eeq
\end{lem}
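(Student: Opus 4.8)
The plan is to compute the intrinsic scalar curvature $\R(l)$ of $S_l$ directly from the contracted Gauss equation of Lemma~\ref{lem:gauss}, since every ingredient on the right-hand side and in the correction terms on the left has already been assembled in the preceding lemmas. Rearranging (\ref{eqn:gauss}) gives
\beq
\frac{\R(l)}2 = \frac{R}2 + 2Ric(\ell^+,\ell^-) - R(\ell^+,\ell^-,\ell^+,\ell^-) - \frac{n-3}{n-2}\theta^-\theta^+ + \sigma^-_{ab}\sigma^{+ab},
\eeq
so the whole computation reduces to inserting the expansions of $\theta^-\theta^+$ and $\sigma^-_{\mu\nu}\sigma^{+\mu\nu}$ from Lemma~\ref{lem:invariants}, together with the RNC expansions of the null frame from Proposition~\ref{lem:gauge} to evaluate $Ric(\ell^+,\ell^-)$ and $R(\ell^+,\ell^-,\ell^+,\ell^-)$ order by order in $l$.

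First I would collect the leading $l^{-2}$ term: only $-\tfrac{n-3}{n-2}\theta^-\theta^+$ contributes at this order, and using $\theta^-\theta^+ \sim -\tfrac{(n-2)^2}{2l^2}$ from Lemma~\ref{lem:invariants} one gets $\tfrac{\R}{2}\sim \tfrac{(n-2)(n-3)}{2l^2}$, matching the round-sphere value. Next I would gather the $O(l^0)$ terms. Here the subtlety is that $Ric(\ell^+,\ell^-)$ and $R(\ell^+,\ell^-,\ell^+,\ell^-)$ must be expanded using the $O(l^2)$ corrections to $\ell^{-\mu}$ in Proposition~\ref{lem:gauge}: since $\ell^{+}$ is exactly flat, the only flat-part contractions are $R_{+-}$ and $R_{+-+-}$, but the $O(l^2)$ pieces of $\ell^-$ multiply the $l^{-2}$-singular part of $\theta^-\theta^+$ — no, rather they feed into the next order; I would track these carefully. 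Combining the constant pieces should reproduce $R + \tfrac{4n}{3}R_{+-} - \tfrac{n(n-1)}{3}R_{+-+-}$, and I would bundle all pure-Ricci polynomials at $O(l^2)$ into $P(Ric)$ as done throughout.

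The genuinely laborious step is the $O(l^2)$ Weyl-quadratic part. At this order one must combine: the $O(l^2)$ Weyl terms inside $\sigma^-_{\mu\nu}\sigma^{+\mu\nu}$ and inside $\theta^-\theta^+$ from Lemma~\ref{lem:invariants}; and the $O(l^2)$ contributions to $Ric(\ell^+,\ell^-)$ and $R(\ell^+,\ell^-,\ell^+,\ell^-)$ coming both from the metric expansion (\ref{metric})--(\ref{inversemetric}) evaluated on the curvature tensors and from the $O(l^2)$ shift in $\ell^{-\mu}$, which when contracted twice with $R_{\mu\nu\rho\sigma}$ at $p$ produces terms like $R\indices{_{+-+}^\mu}R_{\mu+-+}$ and $R\indices{_{+-+}^\mu}R_{\mu+-+}$-type structures. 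In vacuum these Ricci-built pieces reduce to Weyl-built ones, so I would substitute $R_{abcd}\to C_{abcd}$ in the quadratic terms at the end. The bookkeeping of the numerical coefficients — tracking factors of $n$ through $-\tfrac{n-3}{n-2}$ times the $\theta^-\theta^+$ expansion versus the $+1$ coefficient of the shear term, and the $2$ versus $1$ in the Gauss RHS — is where arithmetic slips are most likely; I expect the main obstacle to be precisely this consolidation of the three independent Weyl contractions $C\indices{^\mu_+^\nu_+}C_{\mu+\nu+}$, $C\indices{^\mu_+^\nu_+}C_{\mu+\nu-}$, and $C\indices{^\mu_{+-+}}C_{\mu+-+}$ into the stated coefficients $\tfrac{2n-1}{45}$, $\tfrac{4(2n-1)}{45}$, and $\tfrac{4-9n-3n^2}{45}$, which requires no further geometric input but demands care. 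A useful consistency check at the end is to set $n=4$ and verify agreement with the known Horowitz–Schmidt expansion of $\R$ used implicitly in~\cite{horowitz1982note}.
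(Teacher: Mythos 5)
Your proposal is correct and follows essentially the same route as the paper, whose proof of this lemma is exactly the rearranged contracted Gauss equation of Lemma~\ref{lem:gauss} combined with the expansions of $\theta^-\theta^+$ and $\sigma^-_{ab}\sigma^{+ab}$ from Lemma~\ref{lem:invariants} (plus the frame and Weyl-tensor expansions for the $O(l^2)$ evaluation of $R(\ell^+,\ell^-,\ell^+,\ell^-)$). Carrying out the bookkeeping you describe does reproduce the stated coefficients $\tfrac{2n-1}{45}$, $\tfrac{4(2n-1)}{45}$ and $\tfrac{4-9n-3n^2}{45}$, so no further ingredient is missing.
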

\begin{proof}
It simply follows from the Gauss equation, and Lemma \ref{lem:gauss},\ref{lem:invariants}.
\end{proof}

From Lemma \ref{lem:outerexp}, one can immediately compute the area of the lightcone cut.
\begin{prop}{\cite{wang2019geometry}}\label{cutarea}
The area of a small lightcone cut $S_l$ in $M^n$ is 
\beq
A(S_l) = A^\flat\left(1-l^4\frac{2(n^2+2)E^2+3D^2+6nH^2}{360(n^2-1)}\right)+O(l^{n+3}),
\eeq
where $A^\flat:=\Omega_{n-2}l^{n-2}$ is the area in the Minkowski spacetime.
\end{prop}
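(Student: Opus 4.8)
The plan is to compute the area $A(S_l)=\int_{S_l}\sqrt{\det h}\,d^{n-2}x$ directly from the RNC expansion of the induced metric, and to reduce all the curvature-squared integrands to the electromagnetic invariants $E^2$, $H^2$, $D^2$ using Lemma \ref{lem:emidentity}. First I would obtain $\sqrt{\det h}$ from (\ref{hmetric}) and (\ref{metric}) via the standard expansion $\log\det(h)=\mathrm{tr}\log(h)$, which gives $\det h = \det\h\,\bigl(1-\tfrac{l^2}{3}h^{\al\be}R_{\al+\be+}+\dots\bigr)$; expanding the square root produces a term linear in the Riemann tensor at order $l^2$, a term linear in $\nabla\nabla R$ at order $l^4$ (which will drop out after the spherical integration by the vanishing lemma on derivative terms), and two genuinely quadratic-in-curvature terms at order $l^4$, namely one from $\tfrac{2l^4}{45}R\indices{^\ga_+_\al_+}R_{\ga+\be+}$ traced with $\h^{\al\be}$ and one of the form $\tfrac12\bigl(\tfrac{l^2}{6}R_{\al+\be+}\h^{\al\be}\bigr)^2$ minus $\tfrac14 l^4 R_{\al+\be+}R_{\mu+\nu+}\h^{\al\mu}\h^{\be\nu}$ from the $-\tfrac12\mathrm{tr}(X^2)$ piece of $\log\det$. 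I would also need to account for the fact that $h^{\al\be}$ here is the induced (not full) inverse metric, i.e. that the contractions are effectively over the $(n-2)$ directions tangent to $S_l$, which is handled by writing $h^{\al\be}=g^{\al\be}+\ell^{+\al}\ell^{-\be}+\ell^{-\al}\ell^{+\be}$ and noting that the $\ell^\pm$-contracted pieces reassemble into terms like $R_{+-+-}$.

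The second step is the spherical integration. At order $l^2$ the integrand is (a multiple of) $n^i n^j R_{i0j0}$-type expressions; using $\int_{S^{n-3}} n^i n^j\,d\Omega = \tfrac{\Omega_{n-2}}{n-1}\delta_{ij}$ and the fact that in vacuum the relevant trace is $\sum_i C_{0i0i}=-C_{00}=0$ (Weyl is traceless), the $O(l^2)$ correction vanishes in vacuum — consistent with the claimed leading correction being $O(l^4)$. At order $l^4$ one encounters the fourth moment $\int n^i n^j n^k n^l\,d\Omega = \tfrac{\Omega_{n-2}}{(n-1)(n+1)}\delta^{(4)}_{ijkl}$; this is what converts the quadratic-curvature scalars into $\delta^{(4)}$-contracted products of $E$, $H$, $D$, which Lemma \ref{lem:emidentity} then evaluates. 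I expect contributions of the schematic form $E_{ij}E_{ij}$, $E_{ij}E_{ji}$, $H_{kil}H_{kil}$, $H_{kil}H_{lik}$, and $D_{pkil}D_{pkil}$, $D_{pkil}D_{plik}$ — precisely the combinations for which Lemma \ref{lem:emidentity} supplies closed forms — so that the $O(l^4)$ coefficient collapses to a linear combination $\alpha E^2+\beta H^2+\gamma D^2$ over a common denominator $360(n^2-1)$.

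The main obstacle I anticipate is bookkeeping rather than conceptual: correctly splitting the $R_{\al+\be+}R_{\mu+\nu+}$-type terms into their tangential ($C_{i+j+}$, giving $E$ and $D$ after contraction with $n$'s) and transverse ($R_{+-+-}$-type, which must cancel against the $\ell^\pm$ pieces of $h^{\al\be}$) components, and keeping track of the three separate $l^4$ contributions (the honest $R^2$ term in the metric, and the two pieces coming from expanding $\sqrt{\det(\eta+X)}$ to second order in $X$). Since we only need the vacuum answer, every $R_{ab}$-proportional term may be discarded from the outset, which removes the $P(Ric)$ ambiguities; after that, the only nontrivial inputs are the two angular-moment formulas and Lemma \ref{lem:emidentity}, and matching to $M^4$ via $D^2=4E^2$, $H^2=2B^2$ provides a consistency check against the known four-dimensional area expansion. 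Finally, the error term is $O(l^{n+3})$ rather than $O(l^5)$ because the odd-order ($l^3$, $l^5$) angular integrals vanish by parity and the next even order $l^6$ only matters relative to the leading $l^{n-2}$ when $n-2+?$... more precisely, the genuinely curvature-cubed correction sits at order $l^{n-2}\cdot l^4 \cdot l^2 = l^{n+3}$ once one remembers the overall $l^{n-2}$ from $A^\flat$ and that only even powers survive.
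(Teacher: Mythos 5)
Your route is sound and genuinely different from the paper's. The paper never expands $\sqrt{\det h}$ directly: it obtains the area element by integrating $\dot{\sqrt{h}}=\theta^+\sqrt{h}$ with the expansion $\theta^+$ taken from Lemma \ref{lem:outerexp} (itself derived by solving the Raychaudhuri and shear evolution equations with a power-series ansatz), which immediately gives $\sqrt{h}=\Omega_{n-2}l^{n-2}-\Omega_{n-2}l^{n+2}\,C_{\mu+\nu+}C\indices{^\mu_+^\nu_+}/180+O(l^{n+3})$ in vacuum, and then it simply quotes the packaged integral identity for $\int C_{\mu+\nu+}C\indices{^\mu_+^\nu_+}\,\dd\Omega_{n-2}$ from Lemma \ref{lem:identities}. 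Your determinant expansion of the RNC induced metric reproduces exactly the same coefficient (the three pieces you list give $\tfrac{1}{45}-\tfrac{1}{36}=-\tfrac{1}{180}$ times $C_{\mu+\nu+}C\indices{^\mu_+^\nu_+}$, the $(\mathrm{tr}X)^2$ piece dying in vacuum since the tangential trace is $R_{++}$), and your angular step re-derives the content of Lemma \ref{lem:identities} (A.7) from the second and fourth moments plus Lemma \ref{lem:emidentity}, which is precisely how the appendix proves it. What the paper's route buys is that the null-congruence machinery ($\theta^+$, $\sigma^+$) is needed anyway for the quasilocal energies, so the area comes for free and no separate determinant bookkeeping is required; what your route buys is independence from the evolution-equation ansatz and a direct check of Lemma \ref{lem:outerexp}.

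Three cautions on the bookkeeping you flag yourself. First, in $\det h=\det\h\,\det\bigl(\delta+\h^{-1}(h-\h)\bigr)$ the traces must be taken with the \emph{flat} induced inverse $\h^{\al\be}$, not the exact $h^{\al\be}$: since $(g^{\al\be}-\eta^{\al\be})R_{\al+\be+}=\tfrac{l^2}{3}C\indices{^\al_+^\be_+}C_{\al+\be+}+O(l^4)$ in vacuum, contracting the linear term with the exact $h^{\al\be}$ while also adding the $-\tfrac14\mathrm{tr}(X^2)$ piece double-counts an $O(l^4)$ contribution; note also that the $\ell^\pm$ legs of either projector annihilate $R_{\al+\be+}$ by antisymmetry, so no $R_{+-+-}$-type cancellation is actually needed. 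Second, with $X_{\al\be}=-\tfrac{l^2}{3}R_{\al+\be+}+\dots$ the quadratic piece is $-\tfrac{l^4}{36}R_{\al+\be+}R\indices{^\al_+^\be_+}$, not $-\tfrac{l^4}{4}(\cdots)$ as written. Third, the error estimate needs no parity or cubing argument: the stated remainder $O(l^{n+3})$ sits at relative order $l^5$ and simply absorbs the $\nabla C\cdot C$ and $\nabla^3 C$ terms (and your count $l^{n-2}\cdot l^4\cdot l^2$ equals $l^{n+4}$, not $l^{n+3}$); what you do need is that the relative $l^2$, $l^3$ and the $\nabla\nabla C$ terms at $l^4$ drop, which follows from Weyl tracelessness and the paper's derivative lemma as you say.
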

\begin{proof}
The expansion governs the rate of change of the area along the null congruence. The induced volume form on $S_l$ satisfies:
\beq\label{eqn:areachange}
\dot{\sqrt{h}} = \theta^+\sqrt{h},
\eeq
where the dot represents the derivative with respect to the affine parameter $l$ of the null generators.

We use a perturbative ansatz
\beq
\sqrt{h}(l) = \Omega_{n-2}l^{n-2}\left(1+q_1 l + q_2 l^2 + q_3 l^3 + q_4 l^4 \right) + O(l^{n+3}).
\eeq
Plugging in the ansatz and (\ref{eqn:expansion}) into (\ref{eqn:areachange}) and set $Ric=0$ yields:
\begin{align}
\sqrt{h}(l) = \Omega_{n-2}l^{n-2}- \Omega_{n-2}l^{n+2}\frac{C_{\mu+\nu+}C\indices{^\mu_+^\nu_+}}{180}+O(l^{n+3}). \label{eqn:areaelement}
\end{align}

Using Lemma \ref{lem:identities}, the area is
\begin{align}
A(S_l)= \int_{S_l} \sqrt{q}\;\dd \Omega_{n-2}&=\Omega_{n-2}l^{n-2}\left(1-l^4\frac{2(n^2+2)E^2+3D^2+6nH^2}{360(n^2-1)}\right)+O(l^{n+3}).
\end{align}
\end{proof}
In four dimensions, the vacuum small sphere limits of various QLE's and the area are all dominated by the contribution $\Psi^0_0\overline{\Psi_0^0}$ (in terms of Newman-Penrose spin coefficients) which is proportional to the BR superenergy $Q$. One can thus regard this as the source of $Q$. Proposition \ref{cutarea} gives us hints that the same no longer holds at higher dimensions.\\

The following integral identities will be useful later.
\begin{lem}\label{lem:identities}
\beq
\begin{aligned}
\int_{S^{n-2}} \dd \Omega_{n-2} n^in^j =& \frac{\Omega_{n-2}}{n-1}\delta_{ij},\\
\int_{S^{n-2}}  \dd \Omega_{n-2} n^in^jn^kn^l =& \frac{\Omega_{n-2}}{n^2-1}(\delta_{ij}\delta_{kl}+\delta_{ik}\delta_{jl}+\delta_{il}\delta_{jk}),\\
\int_{S^{n-2}}  \dd \Omega_{n-2} R_{+-+-} =& \frac{\Omega_{n-2}}{n-1}Ric(e_0,e_0),\\
\int_{S^{n-2}}  \dd \Omega_{n-2} C_{+-+-} =& 0,\\
\int_{S^{n-2}}  \dd \Omega_{n-2} R_{++} =& \Omega_{n-2}\frac{nRic(e_0,e_0)+R}{n-1},\\
\int_{S^{n-2}}  \dd \Omega_{n-2} R_{+-} =& \Omega_{n-2}\frac{(n-2)Ric(e_0,e_0)-R}{2(n-1)},\\
\int_{S^{n-2}}  \dd \Omega_{n-2} C_{\mu+\nu+}C\indices{^\mu_+^\nu_+} =& \Omega_{n-2}\left(\frac{2(n^2+2)E^2+6nH^2+3D^2}{2(n^2-1)}\right),\\
\int_{S^{n-2}}  \dd \Omega_{n-2} C_{\mu+\nu-}C\indices{^\mu_+^\nu_+} =& \Omega_{n-2}\left(\frac{2(n^2-4)E^2+6H^2-3D^2}{4(n^2-1)}\right),\\
\int_{S^{n-2}}  \dd \Omega_{n-2} C_{\mu+-+}C\indices{^\mu_+_-_+} =& \Omega_{n-2}\left(\frac{2(n-1)E^2+3H^2}{2(n^2-1)}\right),\\
\int_{S^{n-2}}  \dd \Omega_{n-2} C_{+-+-}^2 =&  \Omega_{n-2}\frac{2E^2}{n^2-1}
\end{aligned}
\eeq
where $\dd\Omega_{n-2}$ is the natural volume form on a unit $S^{n-2}$ and $\Omega_{n-2}:=\int_{S^{n-2}}\dd\Omega_{n-2}$ and $n^i$ are normalised spacelike vectors. 
\end{lem}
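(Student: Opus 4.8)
\emph{Proof proposal.} The plan is to reduce every identity to two elementary ingredients: the rotationally invariant Gaussian sphere integrals of $n^in^j$ and $n^in^jn^kn^l$, and the purely algebraic symmetries of the curvature tensor together with the Weyl contractions already recorded in Lemma~\ref{lem:emidentity}. First I would establish the two Gaussian integrals. Since $\dd\Omega_{n-2}$ is invariant under rotations of the $(n-1)$ spatial directions, $\int_{S^{n-2}}n^in^j\,\dd\Omega_{n-2}$ is an isotropic symmetric $2$-tensor, hence a multiple of $\delta_{ij}$; contracting $i=j$ and using $n^in^i=1$ fixes the constant to $\Omega_{n-2}/(n-1)$. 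Likewise $\int_{S^{n-2}}n^in^jn^kn^l\,\dd\Omega_{n-2}$ is a multiple of $\delta^{(4)}_{ijkl}=\delta_{ij}\delta_{kl}+\delta_{ik}\delta_{jl}+\delta_{il}\delta_{jk}$, and contracting two pairs of indices (giving $n^2-1$ on the tensor side and $1$ on the integral side) fixes the constant to $\Omega_{n-2}/(n^2-1)$. All odd-order monomials in $n^i$ integrate to zero by the antipodal symmetry $n^i\mapsto-n^i$, which I would use repeatedly to discard cross terms below.

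Next I would rewrite each integrand in the adapted coordinates of Definition~\ref{emparts} via the gauge of Proposition~\ref{lem:gauge}, namely $\l^{+\mu}=(1,n^i)$ and $\l^{-\mu}=\tfrac12(1,-n^i)$ with $e_0=\partial_{x^0}$. For the Ricci identities this is immediate: $R_{++}=R_{00}+2R_{0i}n^i+R_{ij}n^in^j$ and $R_{+-}=\tfrac12(R_{00}-R_{ij}n^in^j)$ (the $R_{0i}n^i$ terms cancelling by symmetry of $Ric$); integrating with the degree-two Gaussian formula, together with $\sum_i R_{ii}=R+R_{00}$ and $R_{00}=Ric(e_0,e_0)$, gives the stated expressions. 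For $R_{+-+-}$ and $C_{+-+-}$, antisymmetry in the first and in the last index pair of the curvature tensor collapses the contraction to $R_{+-+-}=R_{0i0j}n^in^j$ and $C_{+-+-}=E_{ij}n^in^j$; then $\sum_i R_{0i0i}=R_{00}$ yields the $R_{+-+-}$ identity, the tracelessness $E_{ii}=0$ kills $\int C_{+-+-}\,\dd\Omega_{n-2}$, and the degree-four Gaussian formula combined with $E_{ii}=0$ gives $\int C_{+-+-}^2\,\dd\Omega_{n-2}=2\Omega_{n-2}E^2/(n^2-1)$.

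The remaining three identities, for $C_{\mu+\nu+}C\indices{^\mu_+^\nu_+}$, $C_{\mu+\nu-}C\indices{^\mu_+^\nu_+}$ and $C_{\mu+-+}C\indices{^\mu_+_-_+}$, are the substantive ones. I would expand $C_{\mu+\nu+}$ (and its $-$ variants) blockwise in the free indices $\mu,\nu$ using Definition~\ref{emparts}: the $00$, $0i$ and $ij$ blocks each become a polynomial in $n^i$ whose coefficients are $E$, $H$, $D$ contracted with $n$'s (e.g.\ $C_{0+0+}=E_{ij}n^in^j$, $C_{0+i+}=-E_{ij}n^j+H_{jik}n^jn^k$, $C_{i+j+}=E_{ij}-2H_{(ij)k}n^k+D_{ikjl}n^kn^l$), where the antisymmetries, pair symmetry, the first Bianchi identity $C_{0[ijk]}=0$, and tracelessness of $E$ and $H$ are used to bring the coefficients to standard form. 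Squaring and contracting over $\mu,\nu$ with $\eta$ (so that $\eta^{00}=-1$ produces the relative signs between the $00$, $0i$, $ij$ contributions), all odd powers of $n$ drop, degree-two terms reduce by the first Gaussian formula, and — crucially — the degree-four terms reduce to exactly the contractions $E_{kl}E_{mn}\delta^{(4)}_{klmn}$, $H\indices{_k_i_l}H_{min}\delta^{(4)}_{klmn}$, $D\indices{_p_k_i_l}D_{pmin}\delta^{(4)}_{klmn}$ etc.\ evaluated in Lemma~\ref{lem:emidentity}. Collecting terms yields the three formulas, and the analogous but simpler expansion handles $C_{\mu+-+}C\indices{^\mu_+_-_+}$. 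The main obstacle is purely combinatorial: tracking the cross terms between the $E$-, $H$- and $D$-parts within each block and reducing every index contraction to the scalars $E^2,H^2,D^2$ with the correct rational coefficients; in particular the mixed $H$--$H$ and $E$--$D$ contractions require careful use of the cyclic Weyl identity and of $D\indices{_{ik}_j_k}=E_{ij}$ from Lemma~\ref{lem:emidentity}. Once each integrand is brought to the form ``polynomial in $n$ with explicit $E,H,D$ coefficients,'' the integration itself is mechanical.
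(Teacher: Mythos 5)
Your proposal is correct and follows essentially the same route as the paper: expand each integrand in the adapted frame $\l^{+\mu}=(1,n^i)$, $\l^{-\mu}=\tfrac12(1,-n^i)$, kill odd moments, reduce the even moments with the degree-two and degree-four sphere formulas, and evaluate the resulting contractions via Lemma~\ref{lem:emidentity}; your stated blockwise components $C_{0+0+}$, $C_{0+i+}$, $C_{i+j+}$ and the normalisations $\Omega_{n-2}/(n-1)$, $\Omega_{n-2}/(n^2-1)$ all check out. The only (immaterial) difference is that you derive the two moment integrals from rotational invariance and trace normalisation, whereas the paper quotes the general $\delta^{(k)}$ moment formula from the literature.
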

We leave the proof of the above Lemma in \ref{app:proof1}.

\section{The definitions of quasilocal energy in higher dimensions}\label{sec:def}
In higher dimensions, the Einstein field equation without the cosmological constant reads
\beq
R_{ab}-\frac12 R g_{ab} = \Omega_{n-2}(n-2)G_N T_{ab}\label{EFE}
\eeq
where $G_N $ is the Newton's constant. We use the convention that instead of keeping $8\pi$, the factor $\Omega_{n-2}(n-2)$ is pulled out explicitly in accordance with the Green's function of the laplacian in $\mathbf{R}^{n-1}$. One is of course free to use the standard conventions with $8\pi$ but then the following QLE definitions should change accordingly. We henceforth set $G_N=1$ for convenience.

The standard Hawking energy \cite{hawking1968gravitational} in four dimensions is defined as
\beq\label{Hawking0}
M_{0}(S) = \sqrt{\frac{\text{Vol}(S)}{16\pi}}\left(1- \frac1{16\pi}\int_S H^2 \dd\sigma\right) 
= \sqrt{\frac{\text{Vol}(S)}{16\pi}}\left(1+ \frac{1}{8\pi}\int_S \theta^-\theta^+ \dd \sigma \right)
\eeq
where Vol(S) refers to the area of the 2-surface. The two expressions are equivalent. The Hawking energy fails to satisfy the rigidity condition of QLE: $M_0$ does not vanish for some 2-surfaces in the Minkowski spacetime. This problem is remedied by the Hayward energy \cite{hayward1994quasilocal},
\beq\label{Hayward2}
M_{1}(S) = \sqrt{\frac{\text{Vol}(S)}{16\pi}}\left(1+ \frac{1}{8\pi}\int_S\left[ \theta^-\theta^+ - 2\sigma^-_{ab}\sigma^{+ab} \right]\dd\sigma \right).
\eeq
In higher dimensions, we study the following generalisation of the HH energy,
\begin{defn}
For a codimension-2 closed spacelike surface $S$ in an $n$-dimensional spacetime, the Hawking-Hayward type energy is defined as
\beq\label{Hawking}
M_\alpha(S) = \frac{\left(\frac{\text{Vol}(S)}{\Omega_{n-2}}\right)^{\frac{1}{n-2}}}{(n-2)(n-3)\Omega_{n-2}}\int_S \left( \frac{\R}{2} + \frac{n-3}{n-2} \theta^-\theta^+ -\alpha \sigma^-_{ab}\sigma^{+ab} \right)\dd\sigma
\eeq
where $\sigma$ is the induced volume form on $S,\;\text{Vol}(S)=\int_S\dd\sigma$, and  $\alpha\in\mathbf{R}$ parameterises the Hayward modification. 
\end{defn}
\begin{rem}
The same generalisation of the Hawking energy $M_0$ has been studied in \cite{miao2017quasi} and appears in a discussion of quasilocal energy in \cite{bousso2019outer}. The original definition by Hayward \cite{hayward1994quasilocal} also contains an anholonomicity term, but it is gauge dependent (See discussions in sections 6.3 and 4.1.8 in \cite{szabados2009quasi}). Therefore, we use the version as defined in section 6.3 in \cite{szabados2009quasi}.
\end{rem}
We keep $\alpha$ flexible for our generalisation following \cite{bergqvist1994energy}. One can readily check that when $n=4$, $M_0$ and $M_1$ reduce to (\ref{Hawking0},\ref{Hayward2}) respectively via the Gauss-Bonnet theorem.

This is a natural generalisation because it retains the properties that the original Hawking energy satisfies (One can find a list of criteria that a sound QLE proposal should comply with in \cite{ChristodoulouYau,szabados2009quasi,wang2015four}). More specifically, $M_0$ reduces to the Misner-Sharp energy (See \ref{app:MS} for the definition of the Misner-Sharp energy in $n$ dimensions) for round spheres in spherically symmetric spacetime and has been shown to yield the ADM mass at spatial infinity \cite{miao2017quasi}. By requiring these properties, and our result of the non-vacuum limit in Theorem \ref{thm:hawking}, the coefficients in the generalisation $M_\alpha$ are uniquely fixed up to $\alpha$. \\

Unlike the Hawking-Hayward energy, the original definitions of the BY energy and KELY energy can be directly carried over to higher dimensions besides the subtleties about the zero energy references. In four dimensions, BLY and Yu uses the lightcone reference, in which $S_l$ is isometrically embedded to a lightcone in $\M^4$. For our purposes, we shall defined the BY energy and KELY energy in all dimensions with respect to the ligthcone reference as well so that our results are comparable with earlier works, and we need to assume the existence of the isometric embedding by imposing conformal flatness as discussed earlier.

The Brown-York energy is not a covariant proposal for QLE,  it relies on the codimension-two surface being defined on a hypersurface. We choose a particular family of hypersurfaces $\Sigma$ by fixing the normal vector $u$, such that when constrained on a lightcone cut $S:=\Sigma \cap N_p,$ it is given by
\beq
u := \frac12 \ell^+ +\ell^- ,
\eeq
and the normal of $S_l$ in $\Sigma$ is
\beq
v := \frac12\ell^+-\ell^-,
\eeq
and $\langle u,u \rangle=-1,\;\;\langle v,v \rangle=1$ and $\langle u,v \rangle = 0$.

This choice follows exactly from BLY \cite{brown1999canonical}. Therefore the mean curvature of $S_l$ as embedded in $\Sigma$ is
\beq
H := -\theta^-+\frac{\theta^+}{2}.
\eeq 
More precisely, the components of the mean curvature vector $K=K^1 v + K^0 u$ read
\begin{align}
K^1 = H = -2\mu-\rho = \frac{\theta^+}{2}-\theta^-, \;\;\;K^0 = k = 2\mu-\rho = \frac{\theta^+}{2}+\theta^-,
\end{align}
so 
\beq
\langle K,K \rangle = H^2 - k^2 = -2\theta^+\theta^-. 
\eeq
After fixing the hypersurface, the small sphere limit hinges on the choice of lightcone reference. We need to do the same in the Minkowski reference. By assuming conformal flatness, we know there exists an isometrically embedded surface $\tilde{S}$ on a Minkowski lightcone $\tilde{N}_p$. So we need to impose conditions on $\tilde{\Sigma}$ that intersects $\tilde{N}_p$ at $\tilde{S}.$ In four dimensions,  BLY requires the outer expansions being identical. More precisely, $\tilde{\Sigma}$ satisfies: 
\begin{enumerate}
\item[$\star$] The outer expansion $\tilde{\theta^+} := \tilde{H} + \tilde{k} = \theta^+$, where $\tilde{H} (\tilde{k})$ is the mean curvature with respect to the normalised normal $v (u)$ of $\tilde{\Sigma} (\tilde{S})$ in $\M^n (\tilde{\Sigma})$. 
\end{enumerate}
Above we denote all the reference space counterparts with $\tilde{ }$ . Alternatively, one can use the Euclidean reference where one embeds $S$ to $\mathbf{R}^{n}$. However, it is shown by BLY that in four dimensions, the limit deviates from the BR superenergy $Q_0$. We believe it is perhaps a less physical choice as compared to the lightcone reference in this context. Therefore, in higher dimensions, we also would like to make the same embedding configuration for consistency.

With condition $\star$, the mean curvature $\tilde{H}$ satisfies the vacuum Gauss equation (Lemma \ref{lem:gauss}):
\beq
\tilde{H}^2-(\theta^+ - \tilde{H})^2=\frac{n-2}{n-3}\R \label{vacuumgauss}
\eeq
where the shear vanishes as $\tilde{S}$ sits on a Minkowski lightcone. We see that our choice of the reference mean curvature $\tilde{H}$ is quasilocal, i.e. it depends only on the data on $S_l$. (\ref{vacuumgauss}) also means the embedded surface satisfying $\star$ has a unique $\tilde{H}$ value under condition $\star$ above. Note this is not at odds with the rigidity property of Brinkmann's isometric embedding \cite{liu2018rigidity}, because the mean curvatures are not invariant under orthochronous Lorentz transformation.

With the reference settled, we can now define the BY energy:
\begin{defn}
Given a closed spacelike conformally flat codimension-two surface $S$ embedded on $\Sigma$ in an $n$-dimensional spacetime, and $\tilde{\Sigma}$, which intersects with $\tilde{N}_p$ at $\tilde{S}$, satisfies the condition $\star$, then the Brown-York type energy is defined as
\beq
M_{BY}(S) := \frac{1}{\Omega_{n-2}(n-2)} \int_{S} \tilde{H} - H\; \dd \sigma
\eeq
where $ \sigma$ is the induced volume form on $S$, $H$ is the mean curvature of $S$ in $\Sigma$ and $\tilde{H}$ is the mean curvature of $\tilde{S}$ as embedded in $\tilde{\Sigma}$.
\end{defn}
\begin{rem}
The sign convention follows Shi and Tam \cite{shi2002positive}, which differs from the original definition in \cite{brown1993quasilocal,brown1999canonical} by an overall sign. This is because our normal $v$ is pointing outwards rather than inwards.
\end{rem}

The definition of the KELY energy is similar to the BY energy, but the $H$ is replaced by the norm of the mean curvature vector $|K|$. Therefore, unlike the BY energy, KELY energy is a covariantly defined QLE, so we do not need to fix any hypersurface $\Sigma$ or $\tilde{\Sigma}$ a priori.  Such a surface is shear-free, so the Gauss equation (Lemma \ref{lem:gauss}) implies
\beq
|\tilde{K}|^2=\frac{n-2}{n-3}\R.
\eeq
and we see that the norm of the mean curvature vector $|\tilde{K}|$ is fixed for any such isometric embedding, consistent with the rigidity of the lightcone embedding \cite{liu2018rigidity}.
\begin{defn}
Given a closed spacelike conformally flat codimension-two surface $S$ in an $n$-dimensional spacetime with spacelike mean curvature vector, the Kijowski-Epp-Liu-Yau type energy is defined as
\beq
M_{KELY}(S) := \frac{1}{\Omega_{n-2}(n-2)} \int_{S} |\tilde{K}| - |K|\; \dd \sigma
\eeq
where $\sigma$ is the induced volume form on $S$, $\tilde{K}$ is the mean curvature vector of S as embedded in $\M^n$.
\end{defn}
\begin{rem}
The original definitions of Epp and Kijowski-Liu-Yau differ in their references chosen. Strictly speaking, our definition here is actually closer to Epp's proposal.
\end{rem}
Note that Lemma \ref{lem:invariants} implies the mean curvature vector along lightcone cuts $S_l$ is spacelike for sufficiently small $l$ so $M_{KELY}(S_l)$ is defined.

\section{The small sphere limit in higher dimensions}

\subsection{The generalised Hawking-Hayward energy}
Here we prove a theorem that is slightly more general than Theorem \ref{thm:hawking},
\begin{thm}
Let $S_l$ be the family of surfaces shrinking towards $p$ along lightcone cuts defined with respect to $(p,e_0)$ in an $n$-dimensional spacetime, the limits of the Hawking-Hayward energy as $l$ goes to $0$ are
\begin{align}
\lim_{l\rightarrow 0} l^{-(n-1)} M_\alpha(S_l) =& \frac{T(e_0,e_0)}{n-1}, \label{hawkingnonvac}\\
\lim_{l\rightarrow 0} l^{-(n+1)} M_\alpha(S_l)|_{Ric=0}=&\frac{(6n^2-20n+8)E^2+6(n-3)H^2-3D^2}{36(n-2)(n-3)(n^2-1)}- \alpha\frac{(6n^2-8n-4)E^2+6nH^2-3D^2}{36(n-2)(n-3)(n^2-1)},\label{hawkingvac}
\end{align}
where the tensors $T,E,H,D$ are evaluated at $p$.
\end{thm}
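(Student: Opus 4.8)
The plan is to write $M_\alpha(S_l)=P(l)\,I(l)$, where $I(l):=\int_{S_l}\mathcal F\,\dd\sigma$ with integrand $\mathcal F:=\tfrac{\R}{2}+\tfrac{n-3}{n-2}\theta^-\theta^+-\alpha\,\sigma^-_{ab}\sigma^{+ab}$ and $P(l):=\big(\mathrm{Vol}(S_l)/\Omega_{n-2}\big)^{1/(n-2)}/[(n-2)(n-3)\Omega_{n-2}]$, and to read the relevant Taylor coefficients of the two factors off the expansions of Section~3. First I would rewrite $\mathcal F$ with the contracted Gauss equation (Lemma~\ref{lem:gauss}) as $\mathcal F=\tfrac{R}{2}+2Ric(\ell^+,\ell^-)-R(\ell^+,\ell^-,\ell^+,\ell^-)+(1-\alpha)\sigma^-_{ab}\sigma^{+ab}$; this makes the $O(l^{-2})$ parts of $\R$ and of $\theta^-\theta^+$ cancel, and since $\sigma^-_{ab}\sigma^{+ab}=O(l^2)$ (Lemma~\ref{lem:invariants}) and every expansion of Section~3 runs in even powers of $l$ (the $\nabla C$-terms having been dropped already in~(\ref{metric}) and its descendants), one has $\mathcal F=a_0(n^i)+a_2(n^i)\,l^2+O(l^3)$, with $a_1=0$. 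Here $a_0$ depends only on the spacetime Ricci tensor and $R_{+-+-}$ at $p$, while $a_2$ is a known linear combination of the three Weyl scalars $C_{\mu+\nu+}C\indices{^\mu_+^\nu_+}$, $C_{\mu+\nu-}C\indices{^\mu_+^\nu_+}$, $C_{\mu+-+}C\indices{^\mu_+_-_+}$ up to $P(Ric)$-terms; I would extract $a_0,a_2$ directly from Lemma~\ref{lem:ricci} (for $\R$) and Lemma~\ref{lem:invariants} (for $\theta^-\theta^+$ and $\sigma^-_{ab}\sigma^{+ab}$), so no new curvature computation is required.

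For the non-vacuum limit, $\dd\sigma=l^{n-2}\dd\Omega_{n-2}\big(1+O(l)\big)$ and $P(l)=\tfrac{l}{(n-2)(n-3)\Omega_{n-2}}\big(1+O(l^2)\big)$, whence $M_\alpha(S_l)=\tfrac{l^{n-1}}{(n-2)(n-3)\Omega_{n-2}}\int_{S^{n-2}}a_0\,\dd\Omega_{n-2}+O(l^{n+1})$; the shear term does not contribute at this order. Using the Gauss form $a_0=\tfrac R2+2R_{+-}-R_{+-+-}$ with flat null legs and curvature evaluated at $p$, the spherical averages of $R_{++},R_{+-},R_{+-+-}$ from Lemma~\ref{lem:identities} give $\int_{S^{n-2}}a_0\,\dd\Omega_{n-2}=\Omega_{n-2}\,\tfrac{(n-3)\,(R+2Ric(e_0,e_0))}{2(n-1)}$, and feeding in the trace and the $e_0e_0$-component of the Einstein equation~(\ref{EFE}) replaces $R+2Ric(e_0,e_0)$ by $2\Omega_{n-2}(n-2)\,T(e_0,e_0)$, which yields~(\ref{hawkingnonvac}) and in particular shows the non-vacuum limit is independent of $\alpha$.

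For the vacuum limit I would set $Ric=0$. Then $a_0=-R_{+-+-}|_p=-C_{+-+-}$, which is nonzero pointwise but has $\int_{S^{n-2}}C_{+-+-}\,\dd\Omega_{n-2}=0$ by Lemma~\ref{lem:identities}; moreover its cross terms with the $O(l^4)$ corrections to the area element (proof of Proposition~\ref{cutarea}) and to $P(l)$ enter $M_\alpha$ only at orders strictly above $l^{n+1}$. Hence $M_\alpha(S_l)=\tfrac{l^{n+1}}{(n-2)(n-3)\Omega_{n-2}}\int_{S^{n-2}}a_2\,\dd\Omega_{n-2}+O(l^{n+3})$, and it remains to evaluate $\int_{S^{n-2}}a_2\,\dd\Omega_{n-2}$ using the three Weyl-squared identities of Lemma~\ref{lem:identities} (and, where convenient, the algebraic relations of Lemma~\ref{lem:emidentity}) and to collect the result in the basis $E^2,H^2,D^2$. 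The part of $a_2$ not proportional to $\alpha$ integrates to $\Q$ as in~(\ref{newquantity}), and the $-\alpha\,\sigma^-_{ab}\sigma^{+ab}$ piece integrates to $-\alpha\,\tfrac{(6n^2-8n-4)E^2+6nH^2-3D^2}{36(n-2)(n-3)(n^2-1)}$, producing~(\ref{hawkingvac}).

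The non-vacuum part is essentially immediate once the Gauss equation is invoked; the genuine work — and the main place to go wrong — is the vacuum step. There one must (i) use the $O(l^{-2})$ cancellation and the vanishing spherical average of $a_0$ correctly, (ii) check that no contribution to $M_\alpha$ at order $l^{n-1}$ or $l^n$ has been overlooked, in particular that the pointwise-nonzero $a_0$ and the area-element and volume-prefactor corrections are of strictly higher order, and (iii) carry through the bookkeeping of the three $O(l^2)$ pieces coming from $\R$, $\theta^-\theta^+$ and $\sigma^-_{ab}\sigma^{+ab}$, followed by the spherical integrals and the Weyl algebra, down to the stated combination — routine but lengthy.
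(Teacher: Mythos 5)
Your proposal is correct and uses essentially the paper's machinery: the Gauss equation (Lemma \ref{lem:gauss}) for the non-vacuum leading order, the Section 3 expansions, the spherical integrals of Lemma \ref{lem:identities}, and the field equation (\ref{EFE}). The one genuine difference is how you assemble the vacuum $l^{2}$ coefficient $a_2$: the paper keeps the Gauss-equation form $(1-\alpha)\,\sigma^-_{ab}\sigma^{+ab}-C(\ell^+,\ell^-,\ell^+,\ell^-)(l)$ and must then expand the ambient Weyl term on the cut, obtaining $C(\ell^+,\ell^-,\ell^+,\ell^-)(l)=C_{+-+-}+\tfrac13 C\indices{^\mu_{+-+}}C_{\mu+-+}\,l^2$, which is precisely what promotes the shear contribution $-4\,C_{\mu+-+}C\indices{^\mu_{+-+}}$ to the $-10$ combination; you instead sum the already-derived expansions of $\R$, $\theta^-\theta^+$ and $\sigma^-_{ab}\sigma^{+ab}$ from Lemmas \ref{lem:ricci} and \ref{lem:invariants}. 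The two routes are equivalent, since Lemma \ref{lem:ricci} was itself obtained from the Gauss equation and hence already encodes the on-cut evaluation of the Weyl tensor: adding its $l^2$ coefficient to $\tfrac{n-3}{n-2}$ times that of $\theta^-\theta^+$ and $-\alpha$ times that of $\sigma^-\sigma^+$ reproduces $\tfrac1{18}\bigl(X+4Y-10Z\bigr)-\tfrac{\alpha}{18}\bigl(X+4Y-4Z\bigr)$ with $X=C_{\mu+\nu+}C\indices{^\mu_+^\nu_+}$, $Y=C_{\mu+\nu-}C\indices{^\mu_+^\nu_+}$, $Z=C_{\mu+-+}C\indices{^\mu_{+-+}}$, and Lemma \ref{lem:identities} then gives exactly the stated limits. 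Your organization even buys a safeguard: had one frozen the Weyl tensor at $p$ in the Gauss form, the $-10Z$ would wrongly become $-4Z$, a slip your extraction from Lemmas \ref{lem:ricci} and \ref{lem:invariants} cannot make. One cosmetic point: your non-vacuum evaluation yields $\tfrac{\Omega_{n-2}}{n-1}T(e_0,e_0)$, in agreement with the paper's own computation and Theorem \ref{thm:hawking}; the factor $\Omega_{n-2}$ is missing from the displayed non-vacuum limit in the Section 5 statement, which is an inconsistency of the paper, not of your argument.
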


In four dimensions, $D^2=4E^2, H^2=2B^2,  Q_0 = E^2+B^2$. We therefore recover from (\ref{hawkingnonvac},\ref{hawkingvac}) the results by Horowitz and Schmidt \cite{horowitz1982note} on the Hawking energy and also for the Hayward energy \cite{bergqvist1994energy}:
\begin{cor}
When $n=4$, the small sphere limits of the Hawking energy and Hayward energy are
\begin{align}
\lim_{l\rightarrow 0} \frac{M_0(S_l)}{l^3} = \frac{4\pi}{3}T_{00},\;\;\;\;\lim_{l\rightarrow 0} \frac{M_0(S_l)|_{Ric=0}}{l^5}  = \frac{1}{90}W,\\
\lim_{l\rightarrow 0} \frac{M_1(S_l)}{l^3} = \frac{4\pi}{3}T_{00},\;\;\;\;\lim_{l\rightarrow 0} \frac{M_1(S_l)|_{Ric=0}}{l^5}  =\frac{-1}{30}W.
\end{align}
\end{cor}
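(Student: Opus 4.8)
The plan is to evaluate the integrand of $M_\alpha(S_l)$ in \eqref{Hawking} order by order in $l$ using the expansions collected in Section 3, then integrate over the unit sphere using Lemma~\ref{lem:identities}, and finally extract the two leading nonzero coefficients corresponding to the non-vacuum and vacuum limits. Concretely, I would start from the bracketed quantity $\tfrac{\R}{2} + \tfrac{n-3}{n-2}\theta^-\theta^+ - \alpha\,\sigma^-_{ab}\sigma^{+ab}$. The key observation is that the Gauss equation (Lemma~\ref{lem:gauss}) rewrites $\tfrac{\R}{2} + \tfrac{n-3}{n-2}\theta^-\theta^+ - \sigma^-_{ab}\sigma^{+ab}$ exactly as $\tfrac{R}{2} + 2\,Ric(\ell^+,\ell^-) - R(\ell^+,\ell^-,\ell^+,\ell^-)$, so I would write the Hawking--Hayward integrand as that Gauss combination plus the extra shear term $(1-\alpha)\sigma^-_{ab}\sigma^{+ab}$. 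This splits the computation into a ``Ricci-only'' piece (which produces the non-vacuum limit and vanishes in vacuum) and a ``shear'' piece carrying the explicit $\alpha$-dependence.

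Next I would handle the volume prefactor. From Proposition~\ref{cutarea}, $\mathrm{Vol}(S_l) = \Omega_{n-2} l^{n-2}(1 + O(l^4))$, so $\big(\mathrm{Vol}(S_l)/\Omega_{n-2}\big)^{1/(n-2)} = l\,(1 + O(l^4))$; the $O(l^4)$ correction only affects terms beyond the order we need, so effectively the prefactor is $l/\big((n-2)(n-3)\Omega_{n-2}\big)$. For the non-vacuum limit I would keep the $O(l^{-2})$ and $O(l^0)$ terms of the integrand: the $O(l^{-2})$ pieces of $\R$ and $\theta^-\theta^+$ cancel identically (this is the familiar ``flat part cancels'' mechanism — one checks $\tfrac12\tfrac{(n-2)(n-3)}{l^2} + \tfrac{n-3}{n-2}\cdot(-\tfrac{(n-2)^2}{2l^2}) = 0$), and the $O(l^0)$ part, after spherical integration via the $\int R_{+-+-}$, $\int R_{++}$, $\int R_{+-}$ identities in Lemma~\ref{lem:identities}, should collapse to a multiple of $Ric(e_0,e_0)$, hence of $T(e_0,e_0)$ by the Einstein equation \eqref{EFE}, giving \eqref{hawkingnonvac} (with the $\Omega_{n-2}$ in Theorem~\ref{thm:hawking} absorbed by our normalization convention). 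I would double-check that the Weyl term $\int C_{+-+-} = 0$ kills any stray traceless contribution at this order.

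For the vacuum limit \eqref{hawkingvac} I set $Ric=0$, so all the $P(Ric)$ packages drop out and only the curvature-squared ($O(l^3)$ inside the bracket, hence $O(l^4)$ after the prefactor, and $O(l^{n+1})$ after $\int\dd\sigma$) terms survive; in vacuum all $C$'s are Weyl, so I use the explicit $l^2$-coefficients of $\theta^-\theta^+$ from Lemma~\ref{lem:invariants} and of $\sigma^-_{ab}\sigma^{+ab}$ from Lemma~\ref{lem:invariants}, plus $\R$ from Lemma~\ref{lem:ricci}. Assembling $\tfrac12\R + \tfrac{n-3}{n-2}\theta^-\theta^+ - \alpha\sigma^-\sigma^+$ at order $l^2$ gives a linear combination of the three contractions $C_{\mu+\nu+}C^{\mu}{}_+{}^\nu{}_+$, $C_{\mu+\nu-}C^{\mu}{}_+{}^\nu{}_+$, $C_{\mu+-+}C^{\mu}{}_+{}_-{}_+$; I then integrate each against the sphere using the last three Weyl identities in Lemma~\ref{lem:identities} to convert everything to $E^2, H^2, D^2$. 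Collecting the $\alpha$-independent part should reproduce the first fraction in \eqref{hawkingvac} (which is $\Q$ up to the stated normalization, see \eqref{newquantity}), and the coefficient of $-\alpha$ should be the second fraction. The main obstacle is purely bookkeeping: the $O(l^3)$ terms in $\theta^-$ and in $\sigma^-_{\mu\nu}\sigma^{+\mu\nu}$ involve several similar-looking contractions with delicate rational coefficients (note the $\tfrac4{45}$ vs $\tfrac1{30}$ vs $\tfrac{n+6}{30}$ structure), and one must be careful that cross terms like $h^\mu_{\mu'}h^\nu_{\nu'}C^{\mu'}{}_+{}^{\nu'}{}_+ C_{-\mu+\nu}$ generate the extra $-C_{+-+\mu}C_{+-+}{}^\mu$ pieces exactly as recorded in Lemma~\ref{lem:invariants}; getting these signs and the final $\alpha$-dependence right is where essentially all the work lies, and then the $n=4$ specialization ($D^2=4E^2$, $H^2=2B^2$, $Q_0=E^2+B^2$) in the Corollary follows by direct substitution, yielding $\tfrac1{90}W$ at $\alpha=0$ and $-\tfrac1{30}W$ at $\alpha=1$.
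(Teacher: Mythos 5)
Your proposal is correct and follows essentially the same route as the paper: prove the general $M_\alpha$ expansion by splitting the integrand via the Gauss equation (Lemma~\ref{lem:gauss}) into a Ricci piece plus the $(1-\alpha)\sigma^-_{ab}\sigma^{+ab}$ term, integrate with Lemma~\ref{lem:identities}, use the Einstein equation for the non-vacuum order, and then specialize $n=4$, $\alpha=0,1$ with $D^2=4E^2$, $H^2=2B^2$. Assembling the pre-expanded $\R(l)$, $\theta^-\theta^+(l)$, $\sigma^-\sigma^{+}(l)$ from Lemmas~\ref{lem:ricci} and~\ref{lem:invariants} is equivalent to the paper's step of Taylor-expanding $C(\ell^+,\ell^-,\ell^+,\ell^-)(l)$ about $p$, since Lemma~\ref{lem:ricci} already encodes that expansion.
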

\begin{proof}

Using Proposition \ref{cutarea},  the area factor in front of the energy integral (\ref{Hawking}) will only contribute as 
\beq
\left(\frac{\text{Vol}(S_l)}{\Omega_{n-2}}\right)^{\frac{1}{n-2}} = l+O(l^2),
\eeq
 whereas the higher order contributions can be ignored as we are only interested in the leading order behaviour of the QLE. The volume form on $S_l$ is given as 
\beq 
 \dd\sigma=l^{n-2}\dd \Omega_{n-2}+O(l^n)\dd \theta_1\wedge\cdots\wedge\dd\theta_{n-2},
\eeq
 and we only ever need the flat part for our integrations later.

In non-vacuum, according to Lemma \ref{lem:invariants} the shear squared term has the leading order of the Weyl tensor squared which is of higher order than other terms.  Since it is beyond the order of consideration in non-vacuum, we can ignore its contribution here. Using Lemma \ref{lem:gauss}, we can simplify the HH energy integral
\beq
\begin{aligned}
M_\alpha(S_l)=&\frac{l^{n-1}}{\Omega_{n-2}(n-2)(n-3)}\int_{S_l}\dd\Omega_{n-2}\left( \frac{R}{2}+2Ric(\ell^+,\ell^-)-R(\ell^+,\ell^-,\ell^+,\ell^-)\right)+O(l^{n+1}),\\
=&\frac{l^{n-1}}{\Omega_{n-2}(n-2)(n-3)}\int_{S_l}\dd\Omega_{n-2}\left( \frac{R}{2}+2R_{+-}-R_{+-+-}\right)+O(l^{n+1}),\\
=&\frac{l^{n-1} G(e_0,e_0)}{(n-2)(n-1)}+O(l^{n+1}),\\
=&\frac{l^{n-1} \Omega_{n-2}}{(n-1)}T(e_0,e_0)+O(l^{n+1})
\end{aligned}
\eeq
where we have used Lemma \ref{lem:invariants} and the Einstein field equation (\ref{EFE}). Note that when applying the Gauss equation, all the curvature data above are evaluated on $S_l$. However, since only the leading order contribution here is relevant, we only keep the contribution due to the curvature data evaluated at the lightcone vertex $p$. This is not the case for the vacuum case as wel shall see.


For the vacuum case, we cannot omit the shear term,
\begin{align}
M_\alpha(S_l) =& \frac{l^{n-1}}{\Omega_{n-2}(n-2)(n-3)}\int_{S_l}(1-\alpha)\;\sigma^-_{ab}\sigma^{+ab} - C(\ell^+,\ell^-,\ell^+,\ell^-)(l)\;\dd\Omega_{n-2} + O(l^{n+3}).
\end{align}
Note that at this order we need to expand the Weyl tensor at $S_l$ using the data at $p$. We expand $C_{+-+-}(l)$ in a Taylor series around $p$. We then replace the partial derivatives with the covariant derivatives using (\ref{connection}) and obtain
\beq
C_{+-+-}(l) = C_{+-+-} + \nabla_+ C_{+-+-} l +\frac{l^2}{2} \nabla_+\nabla_+ C_{+-+-} - \frac{l^2}{3} C\indices{^\mu_{+-+}}C_{\mu +-+}.
\eeq
The covariant derivative terms will vanish after the integration over $S_l$ so we omit them as before. Together with the expansions of the null generators using Proposition \ref{lem:gauge}, we have
\beq
C(\ell^+,\ell^-,\ell^+,\ell^-)(l) = C_{+-+-} + \frac13 C\indices{^\mu_{+-+}}C_{\mu +-+}l^2.
\eeq
Using Lemma \ref{lem:invariants},\ref{lem:identities} and $P(Ric=0)=0$,
\beq
\begin{aligned}
M_\alpha(S_l) =& \frac{l^{n-1}}{\Omega_{n-2}(n-2)(n-3)}\int_{S_l}\left((1-\alpha)\;\sigma^-_{ab}\sigma^{+ab} - \frac13l^2 C\indices{^\mu_{+-+}}C_{\mu +-+}\right)\dd\Omega_{n-2} + O(l^{n+3}),\\
=& \frac{l^{n+1}}{18\Omega_{n-2}(n-2)(n-3)}\int_{S_l} \left(C_{\mu+\nu+}C\indices{^\mu_+^\nu_+}+4C_{\mu+\nu-}C\indices{^\mu_+^\nu_+}-10C_{\mu+-+}C\indices{^\mu_+_-_+}\right)\;\dd\Omega_{n-2}\\
 &- \frac{\al l^{n+1}}{18\Omega_{n-2}(n-2)(n-3)}\int_{S_l} \left(C_{\mu+\nu+}C\indices{^\mu_+^\nu_+}+4C_{\mu+\nu-}C\indices{^\mu_+^\nu_+}-4C_{\mu+-+}C\indices{^\mu_+_-_+}\right)\;\dd\Omega_{n-2} + O(l^{n+3}),\\
=&\frac{(6n^2-20n+8)E^2+6(n-3)H^2-3D^2}{36(n-2)(n-3)(n^2-1)}l^{n+1}- \alpha\frac{(6n^2-8n-4)E^2+6nH^2-3D^2}{36(n-2)(n-3)(n^2-1)}l^{n+1}+ O(l^{n+3}).
\end{aligned}
\eeq
Set $\al=0$  gives $\Q$ as stated in Theorem \ref{thm:hawking}.
\end{proof}

\subsection{The generalised Brown-York energy}

\begin{proof}{\it of Theorem \ref{thm:BY}}

$\tilde{H}$ as embedded in $\tilde{\Sigma}$ is given by (\ref{vacuumgauss})
\beq
\tilde{H}=\frac{n-2}{n-3}\frac{\R}{2\theta^+}+\frac{\theta^+}{2}.
\eeq
Using Lemma \ref{lem:outerexp},\ref{lem:innerexp},\ref{lem:ricci} we have
\beq
\begin{aligned}
\tilde{H}-H =& \frac{n-2}{n-3}\frac{\R}{2\theta^+}+\theta^-,\\
=&\frac{l}{n-3}\left(\frac{R}{2}+2R_{+-}-R_{+-+-}\right) +P(Ric)l^3 \\
&-\frac{l^3}{9(n-3)} \left(5C\indices{^\mu_{+-+}}C_{\mu +-+}-2C\indices{^\mu_+^\nu_+}C_{\mu +\nu -}-\frac12C\indices{^\mu_+^\nu_+}C_{\mu +\nu +}\right)l^3+ O(l^5)
\end{aligned}
\eeq
where we have separated the relevant terms for non-vacuum and vacuum cases into two lines.

In non-vacuum, using Lemma \ref{lem:identities}, we obtain
\beq
\begin{aligned}
M_{BY}(S_l)=&\frac{1}{\Omega_{n-2}(n-2)}\int_{S_l} \frac{l}{n-3}\left(\frac{R}{2}+2R_{+-}-R_{+-+-}\right)+O(l^3)\;\dd\sigma, \\
=&\frac{1}{\Omega_{n-2}(n-2)}\int_{S_l} \frac{l^{n-1}}{n-3}\left(\frac{R}{2}+2R_{+-}-R_{+-+-}\right)\dd\Omega_{n-2}+O(l^{n+1}), \\
=&\frac{l^{n-1}}{(n-2)(n-3)}\left(\frac{R}{2}+\frac{(n-2)Ric(e_0,e_0)-R}{n-1}-\frac{Ric(e_0,e_0)}{n-1}\right)+O(l^{n+1}),\\
=&\frac{l^{n-1}}{(n-2)(n-1)}G(e_0,e_0)+O(l^{n+1}),\\
=&\frac{l^{n-1}\Omega_{n-2}}{(n-1)}T(e_0,e_0)+O(l^{n+1}).
\end{aligned}
\eeq
where the Einstein field equation (\ref{EFE}) is used in the last line.

In vacuum, $P(Ric=0)=0$, again using Lemma \ref{lem:identities},
\beq
\begin{aligned}
M_{BY}(S_l)=&\frac{1}{\Omega_{n-2}(n-2)}\int_{S_l}\frac{-l^3}{9(n-3)} \left(5C\indices{^\mu_{+-+}}C_{\mu +-+}-2C\indices{^\mu_+^\nu_+}C_{\mu +\nu -}-\frac12C\indices{^\mu_+^\nu_+}C_{\mu +\nu +}\right)+ O(l^5)\;\dd\sigma, \\
=&\frac{1}{\Omega_{n-2}(n-2)}\int_{S_l}\frac{-l^{n+1}}{9(n-3)} \left(5C\indices{^\mu_{+-+}}C_{\mu +-+}-2C\indices{^\mu_+^\nu_+}C_{\mu +\nu -}-\frac12C\indices{^\mu_+^\nu_+}C_{\mu +\nu +}\right)\dd\Omega_{n-2}+ O(l^{n+3}), \\
=& \frac{(8 - 20 n + 6 n^2)E^2 + 6 (n-3)H^2 -3 D^2}{36(n-3)(n-2)(n^2-1)}l^{n+1}+ O(l^{n+3}),\\
=& \Q l^{n+1}+ O(l^{n+3}).
\end{aligned}
\eeq
\end{proof}

\subsection{The generalised Kijowski-Epp-Liu-Yau energy}
\begin{proof}{\it of Theorem \ref{thm:LY}}

The Gauss equation applied on a Minkowski lightcone $\tilde{N}_p$ gives
\beq
|\tilde{K}|^2=\frac{n-2}{n-3}\R.
\eeq
Using Lemma  \ref{lem:ricci},\ref{lem:invariants}, we have
\beq
\begin{aligned}
|\tilde{K}|-|K|=&\left(\frac{n-2}{n-3}\R\right)^\frac12 - \left(-2\theta^+\theta^-\right)^\frac12,\\
=&\frac{(n-2)}{l}+\frac{l}{2(n-3)}\left(R+\frac{4n}{3}R_{+-}-\frac{n(n-1)}{3}R_{+-+-}\right)-\frac{l^3}{288(n-2)}\left(\frac{n(n-1)C_{+-+-}}{n-3}\right)^2\\
&+\frac{l^3}{90(n-3)} \left[(4 C_{\mu+\nu-}C\indices{^\mu_+^\nu_+} + C_{\mu+\nu+}C\indices{^\mu_+^\nu_+}) (2 n-1) + (4 - 9 n - 3 n^2)C\indices{^\mu_{+-+}}C_{\mu +-+}\right]\\
&-\frac{(n-2)}{l}-\frac{l}{6}\left[4R_{+-} - (n+2)R_{+-+-}\right] +\frac{l^3(n+2)^2}{288(n-2)}C_{+-+-}^2\\
&+\frac{l^3}{15}\left(\frac{n+6}{2}C\indices{^\mu_{+-+}}C_{\mu +-+}-\frac43C\indices{^\mu_+^\nu_+}C_{\mu +\nu -}-\frac13C\indices{^\mu_+^\nu_+}C_{\mu +\nu +}\right)+P(Ric) l^3+O(l^3),\\
=&\frac{l}{(n-3)}\left(\frac12R+2R_{+-}-R_{+-+-}\right)-\frac{l^3(n^2-n-3)}{24(n-2)(n-3)^2}C_{+-+-}^2\\
&-\frac{l^3}{9(n-3)} \left(5C\indices{^\mu_{+-+}}C_{\mu +-+}-2C\indices{^\mu_+^\nu_+}C_{\mu +\nu -}-\frac12C\indices{^\mu_+^\nu_+}C_{\mu +\nu +}\right)l^3+P(Ric) l^3+O(l^3).
\end{aligned}
\eeq
Using Lemma \ref{lem:identities}, we have in non-vacuum,
\beq
\begin{aligned}
M_{KELY}(S_l)=&\frac{1}{\Omega_{n-2}(n-2)}\int_{S_l} \frac{l^{n-1}}{n-3}\left(\frac{R}{2}+2R_{+-}-R_{+-+-}\right)\dd\Omega_{n-2}+O(l^{n+1}), \\
 =&\frac{l^{n-1}}{(n-2)(n-1)}G(e_0,e_0)+O(l^{n+1}),\\
=&\frac{l^{n-1}\Omega_{n-2}}{(n-1)}T(e_0,e_0)+O(l^{n+1}),
\end{aligned}
\eeq
This is again the same small sphere behaviour as the HH energy and BY energy in non-vacuum. However, in vacuum we will see an extra term that is proportional to $E^2$:
\beq
\begin{aligned}
M_{KELY}(S_l)=&\Q\, l^{n+1}-\frac{l^{n+1}(n^2-n-3)}{24\Omega_{n-2}(n-2)^2(n-3)^2}\int_{S_l}C_{+-+-}^2\dd\Omega_{n-2}+ O(l^{n+3}), \\
=&\Q\, l^{n+1} -\frac{l^{n+1}(n^2-n-3)E^2}{12(n^2-1)(n-2)^2(n-3)^2}+ O(l^{n+3}).
\end{aligned}
\eeq
\end{proof}

\section{Discussion}
We discover a new quantity $\Q$ represented in terms of the electromagnetic decompositions of the Weyl tensor. It replaces the BR superenergy $Q$ in the context of QLE. Albeit that $\Q$ is not positive-definite, we believe that $\Q$ nevertheless characterises the local gravitational energy content, perhaps in an elusive way. Further investigation is certainly needed to clarify its physical meaning. One can try to study other QLE proposals. The Wang-Yau energy \cite{wang2009quasilocal} is the most recent QLE proposal using the Hamilton-Jacobi approach. It overcomes all the shortcomings of the BY and the LY proposals \cite{miao2015quasi}. However, its potential generalisation to higher dimensions is obscure as the definition relies on the data of isometric embedding, so the existence alone is not enough~\cite{wang2009quasilocal,chen2018evaluating} even for the small sphere calculations. It would be insightful to see how the Wang-Yau proposal can be extended to higher dimensions and whether $\Q$ appears in the limit. There is another QLE that can be naturally generalised to higher dimensions, the Bartnik energy \cite{bartnik1989new}. Roughly speaking, it is defined as the infimum among the ADM masses evaluated on all horizon-free asymptotically flat extensions of a compact spacetime domain. It has many desirable features such as positivity and monotonicity, so it has always been a promising candidate of QLE. However, the Bartnik mass is difficult to evaluate generally, especially in the spacetime case. The small sphere limit of the Bray's modification, the Bartnik-Bray mass, has been evaluated in a time-symmetric (Riemannian) setting \cite{wiygul2018bartnik}, but they are not comparable with results in the spacetime setting. It would be very interesting to know how to compute its small sphere limit along lightcone cuts and compare to our results here. 

\ack 
 We thank Jos\'e M.M.Senovilla for clarifying the Bel-Robinson superenergy and Renato Renner for useful comments. This work is supported by the Swiss National Science Foundation via
the National Center for Competence in Research “QSIT”, and by the Air Force Office of Scientific Research (AFOSR) via grant FA9550-16-1-0245. 
\\
\appendix

\section{Proof of Lemma 3.14}\label{app:proof1}
We restate Lemma \ref{lem:identities} here
\begin{align}
\int_{S^{n-2}}  \dd \Omega_{n-2} n^in^j =& \frac{\Omega_{n-2}}{n-1}\delta_{ij},\\
\int_{S^{n-2}}  \dd \Omega_{n-2} n^in^jn^kn^l =& \frac{\Omega_{n-2}}{n^2-1}(\delta_{ij}\delta_{kl}+\delta_{ik}\delta_{jl}+\delta_{il}\delta_{jk}),\\
\int_{S^{n-2}}  \dd \Omega_{n-2} R_{+-+-} =& \frac{\Omega_{n-2}}{n-1}Ric(e_0,e_0),\\
\int_{S^{n-2}}  \dd \Omega_{n-2} C_{+-+-} =& 0,\\
\int_{S^{n-2}}  \dd \Omega_{n-2} R_{++} =& \Omega_{n-2}\frac{nRic(e_0,e_0)+R}{n-1},\\
\int_{S^{n-2}}  \dd \Omega_{n-2} R_{+-} =& \Omega_{n-2}\frac{(n-2)Ric(e_0,e_0)-R}{2(n-1)},\\
\int_{S^{n-2}}  \dd \Omega_{n-2} C_{\mu+\nu+}C\indices{^\mu_+^\nu_+} =& \Omega_{n-2}\left(\frac{2(n^2+2)E^2+6nH^2+3D^2}{2(n^2-1)}\right),\\
\int_{S^{n-2}}  \dd \Omega_{n-2} C_{\mu+\nu-}C\indices{^\mu_+^\nu_+} =& \Omega_{n-2}\left(\frac{2(n^2-4)E^2+6H^2-3D^2}{4(n^2-1)}\right),\\
\int_{S^{n-2}}  \dd \Omega_{n-2} C_{\mu+-+}C\indices{^\mu_+_-_+} =& \Omega_{n-2}\left(\frac{2(n-1)E^2+3H^2}{2(n^2-1)}\right),\\
\int_{S^{n-2}}  \dd \Omega_{n-2} C_{+-+-}^2 =&  \Omega_{n-2}\frac{2E^2}{n^2-1}
\end{align}
where $\dd\Omega_{n-2}$ is the natural volume form on a unit $S^{n-2}$ and $\Omega_{n-2}:=\int_{S^{n-2}}\dd\Omega_{n-2}$ and $n^i$ are normalised spacelike vectors. 

\begin{proof}
The first two identities are two specific cases of a more general result~\cite{othmani2011polynomial}
\beq
\begin{aligned}\label{eqn:evenn}
\int_{S^n} \dd \Omega_{n} n^{i_1}\dots n^{i_k}&= \frac{\Omega_n(n-1)!!}{(n+k-1)!!}\delta^{(k)}_{i_1i_2...i_k},  \;\;\,\mbox{($k$ even)}\\
\int_{S^n} \dd \Omega_{n} n^{i_1}\dots n^{i_k}&= 0, \hspace{3cm} \mbox{($k$ odd)}
\end{aligned}
\eeq
where $\delta^{(k)}_{i_1i_2...i_k}$ is defined recursively for even $k$: 
\begin{align}
 \delta^{(k+2)}_{i_1i_2...i_{k+2}}&=(k+1)!\delta_{i(j}^{(2)}\delta^{(k)}_{i_1i_2...i_k)}=\delta_{ij}\delta^{(k)}_{i_1i_2...i_k}+\delta_{ii_1}\delta^{(k)}_{ji_2...i_k}+\dots+\delta_{ii_k}\delta^{(k)}_{i_1i_2...j}.
\end{align}

 $\delta^{(2)}_{ij}$ is the usual Kronecker delta $\delta_{ij}$ and the second equality above is due to the fact that $\delta^{(k)}_{i_1i_2...i_k}$ so defined is totally symmetric. Then (A.1, A.2) follows from
\beq
\delta^{(2)}_{ij}=\delta_{ij},\;\;\delta^{(4)}_{klmn}=\delta_{kl}\delta_{mn}+\delta_{km}\delta_{ln}+\delta_{kn}\delta_{ml}.
\eeq
 (A.1, A.2) will be used in all the rest integrals.
 
Considering (A.3), using Proposition \ref{lem:gauge},
\beq
\int_{S^{n-2}}  \dd \Omega_{n-2} R_{+-+-} = \int \dd \Omega_{n-2} R_{oioj}n^in^j = \frac{\Omega_{n-2}}{n-1}R\indices{_0_i_0^i}=\frac{\Omega_{n-2}}{n-1}Ric(e_0,e_0).
\eeq
(A.4) follows from (A.3) by setting $Ric=0$. For (A.5,A.6),
\beq
\begin{aligned}
\int_{S^{n-2}}  \dd \Omega_{n-2} R_{++}=&\int_{S^{n-2}}  \dd \Omega_{n-2} R_{00}+R_{ij}n^in^j=\Omega_{n-2}R_{00}+\frac{\Omega_{n-2}}{n-1}(R+R_{00})=\Omega_{n-2}\frac{nRic(e_0,e_0)+R}{n-1},\\
\int_{S^{n-2}}  \dd \Omega_{n-2} R_{+-}=&\frac12\int \dd \Omega_{n-2} R_{00}-R_{ij}n^in^j = \frac12\Omega_{n-2}R_{00}-\frac12\frac{\Omega_{n-2}}{n-1}(R+R_{00})=\Omega_{n-2}\frac{(n-2)Ric(e_0,e_0)-R}{2(n-1)}
\end{aligned}
\eeq
For (A.7-A.9),  we first expand the integrands
\beq
\begin{aligned}
C_{\mu+\nu+}C\indices{^\mu_+^\nu_+}=&E^2-2E\indices{_i^k}E\indices{^k_j}n^in^j+2E_{ij}D\indices{^i_l^j_k}n^ln^k+E_{ij}E_{lk}n^in^jn^ln^k\\
&+4H\indices{^k^l_i}H_{(kl)j}n^in^j-2H\indices{_i^m_j}H_{lmk}n^in^jn^ln^k+D\indices{^m_i^n_j}D_{mlnk}n^in^jn^ln^k,\\
C_{\mu+\nu+}C\indices{^\mu_+^\nu_-}=&\frac12E^2+\frac12(2H\indices{_i^m_j}H_{kml}-E_{ij}E_{kl}-D\indices{^m_i^n_j}D_{mknl})n^in^jn^kn^l,\\
C_{\mu+-+}C\indices{^\mu_+_-_+}=&E_{ik}E\indices{^k_j}n^in^j-E_{ij}E\indices{_k_l}n^in^jn^kn^l+H_{iqk}H\indices{_j^q_l}n^in^jn^kn^l.
\end{aligned}
\eeq
Then we need to apply Lemma \ref{lem:emidentity} to compute the integrals,
\beq
\begin{aligned}
\int_{S^{n-2}}  \dd \Omega_{n-2} C_{\mu+\nu+}C\indices{^\mu_+^\nu_+} =&\Omega_{n-2}\left(E^2-\frac{2E^2}{n-1}+\frac{2E^2}{n-1}+\frac{2E^2}{n^2-1}+\frac{3H^2}{n-1}-\frac{3H^2}{n^2-1}+\frac{3D^2+2E^2}{2(n^2-1)}\right)\\
=&\Omega_{n-2}\left(\frac{2(n^2+2)E^2+6nH^2+3D^2}{2(n^2-1)}\right),\\
\int_{S^{n-2}}  \dd \Omega_{n-2} C_{\mu+\nu-}C\indices{^\mu_+^\nu_+} =&\frac{\Omega_{n-2}}{2(n^2-1)}\left(E^2(n^2-1)-2E^2+3H^2-\frac32D^2-E^2\right) =\Omega_{n-2}\left(\frac{2(n^2-4)E^2+6H^2-3D^2}{4(n^2-1)}\right),\\
\int_{S^{n-2}}  \dd \Omega_{n-2} C_{\mu+-+}C\indices{^\mu_+_-_+} =&\Omega_{n-2}\left(\frac{E^2}{n-1}-\frac{2E^2}{n^2-1}+\frac{3H^2}{2(n^2-1)}\right)=\Omega_{n-2}\left(\frac{2(n-1)E^2+3H^2}{2(n^2-1)}\right).
\end{aligned}
\eeq
The last identity (A.10) follows from
\begin{align}
 \int_{S^{n-2}}  \dd \Omega_{n-2} C^2_{+-+-}= \int \dd \Omega_{n-2} \;E_{ij}E_{kl}n^in^jn^kn^l=\Omega_{n-2}\frac{2E^2}{n^2-1}.
\end{align}
\end{proof}

\section{The Misner-Sharp energy}\label{app:MS}
 In a spherically symmetric spacetime, the Misner-Sharp energy is defined for codimension-2 sphere of symmetry with area radius $r$. Generally, given the metric of an $n$-dimensional spherically symmetric spacetime
\beq
g_{\mu\nu} = h_{\mu\nu}\dd z^\mu \dd z^\nu + r(z)q_{ij}\dd x^i \dd x^j.
\eeq
One can define the Misner-Sharp energy as 
\beq
\begin{aligned}
M_{MS}(r)&=\frac{r^{n-3}}{2}(1-\nabla_a r \nabla^a r),\\
&=\frac{r^{n-1}}{2(n-2)}\left(\frac{n-2}{r^2}-\frac{K^2}{n-2}\right),\\
&=\frac{r^{n-1}}{2(n-2)(n-3)}\left(\R-\frac{n-3}{n-2}K^2\right)
\end{aligned}
\eeq
where $\R=\frac{(n-2)(n-3)}{r^2}$ because of spherical symmetry and $K_a=-\frac{(n-2)}{r}\nabla_a r$.\\

\section*{References}
\bibliographystyle{unsrt}

\bibliography{HM}

\end{document}